\let\set\mathbbm
\def\K{\set K}
\def\sqfp#1{#1^\ast}
\def\lc{\operatorname{lc}}
\def\deg{\operatorname{deg}}
\def\ff#1#2{#1^{\underline{#2}}}
\def\rf#1#2{#1^{\overline{#2}}}
\def\iverson#1{[\![#1]\!]}
\def\clap#1{\hbox to0pt{\hss#1\hss}}
\newtheorem{example}{Example}
\newtheorem{definition}[example]{Definition}
\newtheorem{theorem}[example]{Theorem}
\newtheorem{lemma}[example]{Lemma}
\newtheorem{corollary}[example]{Corollary}
\newtheorem{notation}[example]{Notation}
\begin{document}
\begin{frontmatter}

\title{Trading Order~for~Degree in Creative~Telescoping}

\author[1]{Shaoshi Chen}
\address{Department of Mathematics\\
  North Carolina State University\\
  Raleigh, NC 27695-8205, USA}
\ead{schen@amss.ac.cn}

\thanks[1]{Current address. The work described here was done while S.C. was employed as postdoc at RISC in the
  FWF projects Y464-N18 and P20162-N18. At NCSU, S.C. is supported by the NSF grant CCF-1017217.}

\author[2]{Manuel Kauers}
\address{Research Institute for Symbolic Computation\\
  Johannes Kepler University\\
  A4040 Linz, Austria}
\ead{mkauers@risc.jku.at}

\thanks[2]{M.K. was supported by the FWF grant Y464-N18.}

\begin{abstract}
  We analyze the differential equations produced by the method of creative
  telescoping applied to a hyperexponential term in two variables.  We show that
  equations of low order have high degree, and that higher order equations have
  lower degree.  More precisely, we derive degree bounding formulas which allow
  to estimate the degree of the output equations from creative telescoping as a
  function of the order.  As an application, we show how the knowledge of these
  formulas can be used to improve, at least in principle, the performance of creative telescoping
  implementations, and we deduce bounds on the asymptotic complexity of creative
  telescoping for hyperexponential terms.
\end{abstract}

\begin{keyword}
  Definite integration\sep
  Hyperexponential terms\sep
  Zeilberger's algorithm.
\end{keyword}

\end{frontmatter}

\renewcommand\O{\mathrm{O}}

\section{Introduction}

Creative telescoping is a technique for computing differential or difference
equations satisfied by a given definite sum or integral. The technique became
widely known through the work of \cite{zeilberger91}, who first observed that
creative telescoping in combination with Gosper's algorithm~\citep{gosper78} for
indefinite hypergeometric summation leads to a complete algorithm for computing
recurrence equations of definite hypergeometric sums. This algorithm is now
known as Zeilberger's algorithm~\citep{zeilberger90a}. In its original version,
it accepts as input a bivariate proper hypergeometric term $f(n,k)$ and returns
as output a linear recurrence equation with polynomial coefficients satisfied by
the sum $F(n)=\sum_{k=a}^b f(n,k)$.  An analogous algorithm for definite
integration was given by \cite{almkvist90}. This algorithm accepts as input a
bivariate hyperexponential term $f(x,y)$ and returns as output a linear
differential equation with polynomial coefficients satisfied by the integral
$F(x)=\int_\alpha^\beta f(x,y) dy$. A summary of the method of creative
telescoping for this case is given in Section~\ref{sec:3} below. For further
details, variations, and generalizations, consult for instance
\cite{petkovsek97},~\cite{chyzak00}, \cite{schneider05}, \cite{chyzak09a}, \cite{kauers10j}.  For
implementations, see~\cite{paule95a}, \cite{chyzak98a}, \cite{koepf98},
\cite{schneider04b}, \cite{geddes04}, \cite{koutschan09,koutschan10b}, etc.\

The equations which can be found via creative telescoping have a certain
order~$r$ and polynomial coefficients of a certain degree~$d$. But for a fixed
integration problem, $r$~and~$d$ are not uniquely determined. Instead, there are
infinitely many points $(r,d)\in\set N^2$ such that creative telescoping can
find an equation of order~$r$ and degree~$d$. These points form a region which
is specific to the integration problem at hand. Figure~\ref{fig:1} shows an
example for such a region. Every point $(r,d)$ in the gray region
corresponds to a differential equation of order~$r$ and degree~$d$ which creative
telescoping can find for integrating the rational function
\begin{alignat*}1
  f(x,y)&=\Bigl(3 x^2 y^2+9 x^2 y+9 x^2+10 x y^2+3 x y+4x+1\Bigr)\Big/\Bigl(
     3 x^3 y^3+9 x^3 y^2+x^3 y+3 x^3\\
     &\qquad{}+7x^2 y^3+8 x^2 y^2+5 x^2+8 x y^3+10 x y^2+10 x y+x+5 y^3+10 y^2+5
     y+5\Bigr).
\end{alignat*}
The picture indicates that low order equations have high degree, and that the degree
decreases with increasing order. But what exactly is the shape of the gray
region? And where does it come from? And how can it be exploited? These are the
questions we address in this article.

\begin{figure}


  \centerline{\includegraphics{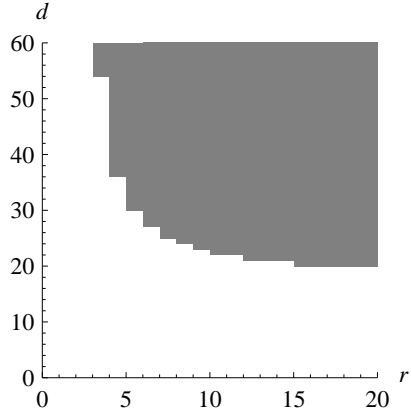}}

  \smallskip

 \caption{Sizes $(r,d)$ of creative telescoping relations for the integral of a
   certain rational function}\label{fig:1}
\end{figure}

\medskip

\textbf{How can it be exploited?}\quad There are two main reasons why the shape
of the gray region is of interest.  First, because it can be used to estimate
the size of the output equations, and hence to derive bounds on the
computational cost of computing them.  Secondly, because it can be used to
design more efficient algorithms by recognizing that some of the equations are
cheaper than others.

An analysis of this kind was first undertaken by \cite{bostan07}. They studied
the problem of computing differential equations satisfied by a given algebraic
function and found a similar phenomenon: low order equations have high degree
and vice versa. Among other things, they found that an algebraic function with a
minimal polynomial of degree~$n$ satisfies a differential equation of order at
most~$n$ with polynomial coefficients of degree $\O(n^3)$, but also a
differential equation of order~$6n$ whose coefficients have degree
only~$\O(n^2)$. Their message is that trading order for degree can pay off.

The same phenomenon applies to creative telescoping, as was shown by
\cite{bostan10b} for the case of integrating rational functions. The results in
the present article extend this work in two directions: First in that we
consider the larger input class of hyperexponential terms, and second in that we
give not only isolated degree estimates for some specific choices of~$r$, but
a curve which passes along the boundary of the gray region and thus
establishes a degree estimate as a function of the order~$r$.

\medskip

\textbf{Where does it come from?}\quad The standard argument for proving the
existence of creative telescoping relations rests on the fact that linear
systems of equations with more variables than equations must have a nontrivial
solution. Every creative telescoping relation can be viewed as a solution of a
certain linear system of equations which can be constructed from the data given
in the input. There is some freedom in how to construct these systems, and
it turns out that this freedom can be used for making the number of variables
exceed the number of equations, and thus to enforce the existence of a
nontrivial solution.

This reasoning not only implies the existence of equations and the termination
of the algorithm which searches for them, but it also implies bounds on the
output size and on the computational cost of the algorithm. But in order
to obtain good bounds, the freedom in setting up the linear systems must be used
carefully. For a good bound, we not only want that the number of variables
exceeds the number of equations, but we also want this to happen already for a
reasonably small system. The shape of the gray region originates from the smallest
systems which have solutions.

\cite{verbaeten74,verbaeten76} introduced a technique which helps in keeping the
size of the systems small. The idea is to saturate the linear systems by
introducing additional variables in a way that avoids increasing the number of
equations. We will make use of this idea in Section~\ref{sec:4} where we propose
a design for a parameterized family of linear systems whose solutions give rise
to creative telescoping relations. Unfortunately, it requires some quite lengthy
and technical calculations to translate this particular design into an
inequality condition which rephrases the condition ``number of variables $>$
number of equations'' in precise terms. However, as a reward we obtain a good
approximation to the gray region as the solution of this inequality.

\medskip

\textbf{What is the exact shape?}\quad We don't know. All we can offer are
some rational functions which describe the boundary of the region of all $(r,d)$
where the ansatz described in Section~\ref{sec:4} has a solution
(Theorem~\ref{thm:1}). The graphs of these rational functions are curves which
pass approximately along the boundary of the gray region.

By construction, for all integer points $(r,d)$ above these graphs we can
guarantee the existence of a creative telescoping relation of order~$r$ with
polynomial coefficients of degree~$d$. But we have no proof that our curves are
best possible. Experiments have shown that at least in some cases, our curve
describes the boundary of the gray region exactly, or within a negligible
error. In other cases, there remains a significant portion of the gray region
below our curve when $r$ is large.

In cases where the curve from Theorem~\ref{thm:1} is tight, we can compute the
points $(r,d)$ for which certain interesting measures (such as computing time,
output size, \dots) are minimized, as shown in Section~\ref{sec:6}. Even when
the curve is not tight, these calculations still give rise to new asymptotic
bounds (including the multiplicative constants) of the corresponding
complexities. We expect that this data will be valuable for
constructing the next generation of symbolic integration software.

\section{Creative Telescoping for Hyperexponential Terms}\label{sec:3}

We consider in this article only hyperexponential terms as integrands.
Throughout the article, $\K$~is a field of characteristic~$0$,
and $\K(x,y)$ is the field of bivariate rational functions in $x$ and~$y$ over~$\K$.
Let~$D_x$ and~$D_y$ denote the derivations on~$\K(x, y)$ such that~$D_xc=D_yc=0$
for all~$c\in \K$, and~$D_xx=1$, $D_xy=0$, $D_yx=0$, $D_yy=1$.
One can see that~$D_x$ and~$D_y$
commute with each other on~$\K(x, y)$. We say that a field~$\set E$
containing~$\K(x, y)$ is a \emph{differential field extension}
of $\K(x,y)$ if the derivations~$D_x$ and~$D_y$ are extended to derivations on~$\set E$
and those extended derivations, still denoted by~$D_x$ and~$D_y$, commute with each other on~$\set E$.

\begin{definition}
 An element $h$ of a differential field extension $\set E$ of $\K(x,y)$
  is called \emph{hyperexponential} (over $\K(x,y)$) if
  \[
    \frac{D_x h}{h}\in\K(x,y)\quad\text{and}\quad
    \frac{D_y h}{h}\in\K(x,y).
  \]
\end{definition}

When $h\in\set E$ is a hyperexponential term and $r_1,r_2\in\K(x,y)$ are such that
$(D_x h)/h=r_1$ and $(D_y h)/h=r_2$, then $D_xD_yh=D_yD_xh$ implies $D_yr_1=D_xr_2$.
Conversely, \cite{christopher99} has shown for algebraically closed ground fields $\K$ that
for any two rational functions $r_1,r_2\in\K(x,y)$ with $D_yr_1=D_xr_2$
there exist $a/b\in\K(x,y)$,
$c_0,\dots,c_L\in\K[x,y]$ and $e_1,\dots,e_L\in\K$ with
\[
 r_1=\frac{D_xc_0}{c_0} + D_x\Bigl(\frac ab\Bigr) + \sum_{\ell=1}^L e_\ell \frac{D_xc_\ell}{c_\ell}
 \quad\text{and}\quad
 r_2=\frac{D_yc_0}{c_0} + D_y\Bigl(\frac ab\Bigr) + \sum_{\ell=1}^L e_\ell \frac{D_yc_\ell}{c_\ell}.
\]
Together with Theorem~2 of \cite{bronstein05}, it follows that there exists
a differential field extension~$\set E$ of $\K(x,y)$ and an element
$h\in\set E$ with $(D_xh)/h=r_1$ and $(D_yh)/h=r_2$ which we can write
in the form
\[
  h = c_0\exp\Bigl(\frac{a}{b}\Bigr)\prod_{\ell=1}^L c_\ell^{e_\ell},
\]
where $a\in\K[x,y]$, $b,c_0,\dots,c_L\in\K[x,y]\setminus\{0\}$,
$e_1,\dots,e_\ell\in\set K$, and the expressions $\exp(a/b)$ and $c_\ell^{e_\ell}$
refer to elements of $\set E$ on which $D_x$ and $D_y$ act as suggested by
the notation. We assume from now on that hyperexponential terms are always
given in this form, and we use the letters $a,b,c_0,\dots,c_L,e_1,\dots,e_L$
consistently throughout with the meaning they have here.

\begin{example}
 $h=\exp(x^2y)\sqrt{x-2y}$ is a hyperexponential term. We have
  \begin{alignat*}1
   \frac{D_x h}h &= \frac{1+4x^2y-8xy^2}{2x-4y} = 2xy + \frac1{2x-4y}\in\K(x,y),\\
   \frac{D_y h}h &= \frac{x^3-2x^2y-1}{x-2y} = x^2 - \frac1{x-2y} \in\K(x,y).
  \end{alignat*}
  For this term, we can take $c_0=1$, $a=x^2y$, $b=1$, $c_1=x-2y$, $e_1=\tfrac12$.
\end{example}

We may adopt the
additional condition (without loss of generality) that the $c_\ell$ ($\ell>0$)
are square free and pairwise coprime, and that $e_\ell\not\in\set N$ for all
$\ell>0$. The estimates derived below do not depend on these additional
conditions, but will typically not be sharp when they are not fulfilled. For
simplicity, we will exclude throughout some trivial special cases by assuming
that all $e_\ell$ are nonzero and that
$\max\{\deg_xa,\deg_xb\}+\sum_{\ell=1}^L\deg_xc_\ell$ and
$\max\{\deg_ya,\deg_yb\}+\sum_{\ell=1}^L\deg_yc_\ell$ are nonzero. These latter
two conditions encode the requirement that $h$ is neither independent of $x$ nor
independent of~$y$, nor simply a polynomial.

Applied to the hyperexponential term~$h$, the method of creative telescoping
consists of finding, by whatever means, polynomials $p_0,\dots,p_r\in\K[x]$, not
all zero, and a hyperexponential term~$Q$ such that
\[
  p_0 h
  +p_1 D_x h
  +\cdots
  +p_r D_x^r h
  = D_y Q.
\]
An equation of this form is called a \emph{creative telescoping relation}
for~$h$, the differential operator $P:=p_0+p_1D_x+\cdots+p_rD_x^r$
appearing on the left is called the \emph{telescoper} and $Q$ is called the
\emph{certificate} of the relation. The telescoper is required to be nonzero and
free of~$y$, but the certificate may be zero or it may involve both $x$ and~$y$.
When $p_r\neq0$, the number~$r$ is called the \emph{order} of~$P$, and
$d:=\max_{i=0}^r \deg_x p_i$ is called its \emph{degree}.

To motivate the form of a creative telescoping relation, assume that $h=h(x,y)$ can be interpreted
as an actual function in $x$ and $y$ and consider the integral $f(x)=\int_\alpha^\beta h(x,y)dy$.
Then integrating both sides of a creative telescoping relation implies that $f$
satisfies the inhomogeneous differential equation
\[
  p_0(x) f(x)
  +p_1(x) D_x f(x)
  +\cdots
  +p_r(x) D_x^r f(x)
  = \bigl[Q(x,y) \bigr]_{y=\alpha}^\beta.
\]
In the frequent situation that the inhomogeneous part happens to evaluate to zero,
this means that the telescoper of $h$ annihilates the integral~$f$.

\begin{example}
  A creative telescoping relation for $h=\exp(x^2y)\sqrt{x-2y}$ is
  \[
    (3x^3-6)h-2x D_x h = D_y\bigl((3x-4y) h\bigr).
  \]
  It consists of the telescoper $P=(3x^3-6)-2x D_x$ and the certificate $Q=(3x-4y)h$.
  For the definite integral $f(x) := \int_{-\infty}^{x/2} \exp(x^2y)\sqrt{x-2y} dy$,
  we obtain the differential equation
  \begin{alignat*}1
    (3x^3-6)f(x)-2x D_x f(x) &=0.
  \end{alignat*}
\end{example}

Creative telescoping relations for hyperexponential terms can be found with the
algorithm of~\cite{almkvist90}, which relies on a continuous analogue of Gosper's summation
algorithm. A more direct approach was considered by Apagodu (alias Mohammed) and
Zeilberger~\citeyearpar{mohammed05,apagodu06}. After making a suitable choice
for the denominator of~$Q$, they fix an order~$r$ and a degree~$s$
for the numerator of~$Q$, make an ansatz with undetermined
coefficients, and obtain a linear system by comparing coefficients. Appropriate
choices of $r$ and $s$ ensure that this linear system has a nontrivial
solution, and also lead to a sharp bound on the order~$r$ of the telescoper.

Let us illustrate this reasoning for the case where the integrand is a rational
function $h=u/v\in\K(x,y)$ with $\deg_y u<\deg_y v$ and $v$ irreducible.  Fix
some~$r$. Then we have to find $p_0,\dots,p_r\in\K(x)$ and a rational function
$Q\in\K(x,y)$ with
\[
  p_0 h + p_1 D_x h + \cdots + p_r D_x^r h = D_y Q.
\]
A reasonable choice for $Q$ is $Q=\bigl(\sum_{i=0}^s q_i y^i\bigr)/
v^r$, where $s=\deg_y u + (r-1) \deg_y v$ and
$q_0,\dots,q_s$ are unknowns, because with this choice, both sides of
the equation are equal to a rational function with the same denominator
$v^{r+1}$ and numerators of degree at most $\deg_y u + r \deg_y v$
in~$y$ in which the unknowns $p_i$ and $q_j$ appear linearly.  Comparing
coefficients with respect to $y$ on both sides leads to a homogeneous linear
system of at most $1+\deg_y u + r \deg_y v$ equations with $(r+1)+(s+1)$
unknowns and coefficients in~$\K(x)$. This system will have a nontrivial
solution if $r$ is chosen such that
\[
  (r+1)+(s+1) > \deg_y u + r \deg_y v + 1
  \iff
  r \geq \deg_y v.
\]
All these solutions must lead to a nonzero telescoper~$P$ because any
nontrivial solution with $P=0$ would have a nonzero certificate $Q$ with
$D_y Q=0$, and this is impossible because $s$ was chosen such that the
numerator of $Q$ has a strictly lower degree than its denominator.

We have thus shown the existence of telescopers of any order $r\geq\deg_y
v$. This is a good bound, but it does not provide any estimate on their
degrees~$d$. We will next derive inequalities involving both $r$ and~$d$ by
constructing linear systems with coefficients in $\K$ rather than in~$\K(x)$.

\section{Shaping the Ansatz}\label{sec:4}\label{sec:ansatz}

Let $h$ be a hyperexponential term and consider an ansatz of the form
\[
  P = \sum_{i=0}^r \sum_{j=0}^{d_i} p_{i,j}x^j D_x^i, \qquad
  Q = \biggl(\sum_{i=0}^{s_1}\sum_{j=0}^{s_2} q_{i,j}x^i y^j\biggr)\frac{h}{v}
\]
for a telescoper~$P$ and a certificate~$Q$. The plan is to find a good choice
for the parameters $r,s_1,s_2,v,d_0,\dots,d_r$. The only restriction we have is
that the linear system obtained from equating all the coefficients in the
numerator of the rational function $(Ph-D_yQ)/h$ to zero should have a solution
in which not all the $p_{i,j}$ are zero. The remaining freedom can be used to
shape the ansatz such as to keep $d:=\max_{i=0}^r d_i$ small.

As a sufficient condition for the existence of a solution, we will require that
the number of terms $x^i y^j$ in the numerator of the rational function
$(Ph-D_yQ)/h$ (i.e., the number of equations) should be less than
$\sum_{i=0}^r(d_i+1)+(s_1+1)(s_2+1)$ (i.e., the number of variables $p_{i,j}$
and~$q_{i,j}$). As shown in the following example, this condition is really
just sufficient, but not necessary.

\begin{example} Let $h=u/v$ be the rational function from the introduction.
  With $r=3$, $d_0=d_1=d_2=d_3=d=54$, and $Q=\bigl(\sum_{i=0}^{62}\sum_{j=0}^{8}
  q_{i,j}x^i y^j\bigr)\big/v^3$, comparing the coefficients of the numerator of
  $(Ph-D_yQ)/h$ to zero gives a linear system with $787$ variables and $792$
  equations. This system has a nonzero solution although $792>787$.
\end{example}

This phenomenon is not only an unlikely coincidence in this particular example,
but it happens systematically when the parameters of the ansatz are not well
chosen. Estimates which are only based on balancing the number of variables and
the number of equations will then overshoot. It is therefore preferable to shape
the ansatz for $P$ and $Q$ in such a way that the linear system originating from
it will have a nullspace whose dimension is exactly the difference between the
number of equations and the number of variables (or 0 if there are more
equations than variables).

The goal of this section is to describe our choice for the ansatz of telescoper
and certificate. The form of the ansatz for the telescoper is given in
Section~\ref{sec:3.1}, the certificate is discussed in Section~\ref{sec:3.2}.
In the beginning, we collect some facts about the rational functions
$(D_x^ih)/h$ which are used later for calculating how many equations a
particular ansatz induces. The following notational conventions will be
used throughout.

\begin{notation}
\begin{itemize}
\item $\lc_z p$ and $\deg_z p$ refer to the leading coefficient and the degree
  of the polynomial~$p$ with respect to the variable~$z$, respectively.  For the
  zero polynomial, we define $\deg_z0:=-\infty$ and $\lc_z0:=0$.
\item $\sqfp p$ refers to the square free part of the polynomial~$p$ with respect to
  all its variables, e.g., $\sqfp{\bigl((x+1)^3 (y+3)^2\bigr)}=(x+1)(y+3)$.
  Note that $\sqfp p$ is only unique up to multiplication by elements from $\K\setminus\{0\}$,
  but that for any choice of $\sqfp p$, the degrees $\deg_x \sqfp p$ and $\deg_y \sqfp p$
  are uniquely determined and we have that $\sqfp p (D_x p)/p$ is a polynomial in $x$ and~$y$.
  These are the only properties we will use.
\item $\ff zn:=z(z-1)(z-2)\cdots(z-n+1)$ and $\rf zn:=z(z+1)(z+2)\cdots(z+n-1)$ denote
  the falling and rising factorials, respectively. For $n\leq0$ we define $\ff zn:=\rf zn:=1$.
\item If $z$ is a real number, then $z^+:=\max\{0,z\}$.
\item If $z$ is a real number, then $\lfloor z\rfloor:=\max\{x\in\set Z:x\leq z\}$,
  $\lceil z\rceil:=\min\{x\in\set Z:x\geq z\}$, and
  $\lfloor z\rceil:=\lfloor z+\tfrac12\rfloor$ denotes the nearest integer to~$z$.
\item If $\Phi$ is a formula then $\iverson\Phi$ denotes the Iverson bracket, which
  evaluates to $1$ if $\Phi$ is true and to $0$ if $\Phi$ is false, e.g.,
  $z^+=\iverson{z\geq0}z$; $\delta_{i,j}=\iverson{i=j}$, etc.
\end{itemize}
\end{notation}

\begin{lemma}\label{lemma:7} Let $h$ be a hyperexponential term and $i\geq0$.
\begin{enumerate}
\item\label{lemma:7:1} If $\deg_x a > \deg_x b$, then
  \[
   \frac{D_x^i h}{h} = \frac{N_i}{c_0\bigl(b\sqfp  b\prod_{\ell=1}^L c_\ell\bigr)^i}
  \]
  for some polynomial $N_i\in\K[x,y]$ with
  \begin{alignat*}1
   \deg_x N_i&=\deg_xc_0 + i\Bigl(\deg_x a+\deg_x \sqfp b+\sum_{\ell=1}^L \deg_x c_\ell - 1\Bigr),\\[-3pt]
   \deg_y N_i&\leq \deg_yc_0+i\Bigl(\max\{\deg_y a,\deg_y b\} + \deg_y \sqfp b + \sum_{\ell=1}^L \deg_y c_\ell\Bigr),\\[-3pt]
   \lc_x N_i &=(\lc_x c_0)\Bigl(\lc_x a\sqfp b\prod_{\ell=1}^L c_\ell\Bigr)^i(\deg_xa-\deg_xb)^i.
  \end{alignat*}
\item\label{lemma:7:2} If $\deg_x a\leq\deg_x b$, then
  \[
   \frac{D_x^i h}{h} = \frac{N_i}
                  {c_0\bigl(b\sqfp b\prod_{\ell=1}^L c_\ell\bigr)^i}
  \]
  for some polynomial $N_i\in\K[x,y]$ with
  \begin{alignat*}1
    \deg_x N_i&=\deg_x c_0 + i\Bigl(\deg_x b+\deg_x\sqfp b+\sum_{\ell=1}^L \deg_x c_\ell - 1\Bigr)
               - \iverson{\omega\in\set N\land i>\omega}\delta,\\[-3pt]
    \deg_y N_i&\leq \deg_yc_0+ i\Bigl(\max\{\deg_y a,\deg_y b\} + \deg_y \sqfp b + \sum_{\ell=1}^L \deg_y c_\ell\Bigr),\\[-3pt]
     \lc_x N_i&=\left\{\begin{array}{ll}
           (\lc_x c_0)\bigl(\lc_xb\sqfp b\prod\limits_{\ell=1}^Lc_\ell\bigr)^i
        \ff{\omega}i&\quad\text{if $\omega\not\in\set N$ or $i\leq\omega$;}\\
           (\lc_x N_{\omega+1})
           \bigl(\lc_x b\sqfp b\prod\limits_{\ell=1}^Lc_\ell\bigr)^{i-(\omega+1)}
           \ff{(-\delta-1)}{i-(\omega+1)}
           &\quad\text{if $\omega\in\set N$ and $i>\omega$,}
         \end{array}\right.
   \end{alignat*}
   where $\omega:=\deg_xc_0+\sum\limits_{\ell=1}^L e_\ell\deg_x c_\ell$ and,
   if $\omega\in\set N$,
   \[
    \delta:=\deg_xc_0+(\omega+1)\Bigl(\deg_x b+\deg_x\sqfp b
     +\sum_{\ell=1}^L\deg_x c_\ell-1\Bigr)-\deg_x N_{\omega+1}\geq1.
   \]
\end{enumerate}
\end{lemma}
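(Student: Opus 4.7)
The plan is to induct on $i$, with the trivial base case $N_0 = c_0$. For the step, the engine is the operator identity $R_{i+1} = D_xR_i + R_iL$ for $R_i := D_x^ih/h$ with logarithmic derivative $L := D_xh/h$. Writing $R_i = N_i/D_i$ with $D_i := c_0(b\sqfp b\prod_\ell c_\ell)^i$ and multiplying through by $D_{i+1}$, a short manipulation (using $D_xD_i/D_i = D_xc_0/c_0 + i\,D_xB/B$ with $B := b\sqfp b\prod_\ell c_\ell$) collapses the $D_xc_0/c_0$ contributions against each other and yields the polynomial recurrence
\[
N_{i+1} = B\,D_xN_i - iN_i\,D_xB + BN_i\,D_x(a/b) + N_i\sum_{\ell=1}^L e_\ell B\,\frac{D_xc_\ell}{c_\ell}.
\]
The only nontrivial point in verifying that every summand is a polynomial is that $b\sqfp b\,D_x(a/b) = (b\,D_xa - a\,D_xb)/(b/\sqfp b)$ is a polynomial; this in turn follows because, for every irreducible factor of $b$ of multiplicity $m$, both $b\,D_xa$ and $a\,D_xb$ are divisible by the $(m-1)$st power of that factor. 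The $y$-degree bound is then immediate from the recurrence, since each summand adds at most $\max\{\deg_ya,\deg_yb\} + \deg_y\sqfp b + \sum_\ell\deg_yc_\ell$ to the previous $y$-degree.

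For the $x$-degree and leading coefficient, I compare the top $x$-degrees of the four summands. In case (1), $\deg_xa > \deg_xb$ makes $b\sqfp b\,D_x(a/b)$ a polynomial of $x$-degree exactly $\deg_xa + \deg_x\sqfp b - 1$ with leading coefficient $\lc_xa\,\lc_x\sqfp b\,(\deg_xa-\deg_xb)$, so the summand $BN_i\,D_x(a/b)$ strictly dominates the remaining three (which have $x$-degree at most $\deg_xN_i + \deg_xB - 1$); this gives a clean one-term multiplicative recurrence on $\lc_xN_i$ that iterates directly to the claimed formula. In case (2), that same summand has strictly smaller $x$-degree (a slightly subtle point when $\deg_xa=\deg_xb$, where further cancellation in $b\,D_xa - a\,D_xb$ kicks in), so only the other three survive at the top and a brief calculation collecting their leading coefficients yields
\[
\lc_xN_{i+1} = \lc_xB\cdot\lc_xN_i\cdot\bigl(d_i - i\deg_xB + \textstyle\sum_\ell e_\ell\deg_xc_\ell\bigr), \qquad d_i := \deg_xN_i.
\]
Substituting the generic value $d_i = \deg_xc_0 + i(\deg_xB-1)$ simplifies the bracket to $\omega - i$, producing the falling factorial $\ff{\omega}{i}$ upon iteration.

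The main obstacle is the transition that occurs when $\omega \in \set N$. Then the factor $(\omega - i)$ vanishes exactly once, at $i = \omega$, which kills the expected leading coefficient and forces $\deg_xN_{\omega+1}$ to drop by an amount $\delta \geq 1$---precisely the quantity named in the statement. After the drop one restarts the leading-coefficient recurrence with the new reduced degree $d_i = \deg_xc_0 + i(\deg_xB-1) - \delta$, under which the bracket becomes $-(i - \omega + \delta)$; this factor never vanishes for $i \geq \omega+1$ since $\delta \geq 1$, so no further drops can occur and iteration produces the rising-factorial-type coefficient $\ff{-\delta-1}{i-(\omega+1)}$. Verifying that this is indeed the only drop, and that no accidental cancellations at lower $x$-degrees subvert the induction, is where the real care is needed.
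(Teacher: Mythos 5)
Your proof is correct and follows essentially the same route as the paper's: induction on $i$ via the four-term polynomial recurrence for $N_{i+1}$, a case split on whether the $D_x(a/b)$ summand dominates the top $x$-degree, the leading-coefficient recurrence collapsing to the factor $\omega-i$ (hence $\ff{\omega}{i}$), and a restart of the induction at $i=\omega+1$ with the drop $\delta$ turning the factor into $\omega-i-\delta$, which is nonzero thereafter. The paper handles your closing worry the same way you do implicitly: only the degree and leading coefficient are tracked, and the definition of $\delta$ as the actual drop at $i=\omega+1$ guarantees $\lc_xN_{\omega+1}\neq0$, so no further analysis of lower-order cancellations is needed.
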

\begin{proof} All claims are proved by induction on~$i$. For $i=0$, there is nothing
  to show in any of the cases. The calculations for the induction step $i\to i+1$ are
  as follows.
  \begin{enumerate}
  \item Let $v:=b\sqfp b\prod_{\ell=1}^L c_\ell$ and write $m_i$ for the claimed
    value of $\deg_xN_i$. Then
    \begin{alignat*}1
      \frac{D_x^{i+1}h}{h}&=D_x\biggl(\frac{N_i}{c_0v^i}
        c_0\exp\Bigl(\frac{a}{b}\Bigr)\prod_{\ell=1}^Lc_\ell^{e_\ell}
      \biggr)\bigg/\biggl(c_0\exp\Bigl(\frac{a}{b}\Bigr)\prod_{\ell=1}^Lc_\ell^{e_\ell}\biggr)\\
      &=\frac{(D_xN_i)v-iN_iD_xv}{c_0v^{i+1}}+\frac{N_i}{c_0v^i}\frac{(D_xa)\sqfp b-a\sqfp b(D_xb)/b}{b\sqfp b}
        +\frac{N_i}{c_0v^i}\sum_{\ell=1}^Le_\ell\frac{D_xc_\ell}{c_\ell}\\
      &=\frac{(D_xN_i)v-iN_iD_xv+N_i\bigl(\prod\limits_{\ell=1}^Lc_\ell\bigr)\bigl((D_x
        a)\sqfp b-a\sqfp b\frac{D_xb}b\bigr)+N_iv\sum\limits_{\ell=1}^L e_\ell\frac{D_x
          c_\ell}{c_\ell}}{c_0v^{i+1}}.
    \end{alignat*}
    Since $\deg_x a>\deg_x b$ by assumption, we have
    \begin{alignat*}1
     &\deg_x\biggl((D_xN_i)v-i N_iD_xv+N_i v\sum_{\ell=1}^L e_\ell\frac{D_xc_\ell}{c_\ell}\biggr)\\
     &\qquad{}\leq\deg_x N_i+\deg_x v - 1
        = m_i + \deg_x\sqfp b + \deg_x b + \smash{\sum_{\ell=1}^L}\deg_xc_\ell - 1\\
     &\qquad{}< m_i + \deg_x a + \deg_x\sqfp b + \sum_{\ell=1}^L\deg_xc_\ell - 1
      = m_{i+1}.
    \end{alignat*}
    Furthermore, because of
    \[
      (D_x a)\sqfp b - a\sqfp b\frac{D_x b}b =
       (\lc_x a)(\lc_x \sqfp b)(\deg_x a-\deg_x b)x^{\deg_x a+\deg_x\sqfp  b-1} + \cdots
    \]
    we have
    \begin{alignat*}1
        N_i\Bigl((D_xa)\sqfp b-a\sqfp b\frac{D_xb}b\Bigr)\prod_{\ell=1}^L c_\ell
     &= (\lc_xN_i)\Bigl(\lc_x a\sqfp b\prod_{\ell=1}^Lc_\ell\Bigr)(\deg_x a-\deg_x b)x^{m_{i+1}}+\cdots.
    \end{alignat*}
    This completes the proof that $(D_x^{i+1}h)/h$ has the denominator as claimed and that its
    numerator has degree and leading coefficient with respect to $x$ as claimed.
    The remaining degree bound with respect to $y$ follows from
    \begin{alignat*}1
          &\deg_y\Bigl((D_xN_i)v-i N_iD_xv+
          N_i v\sum_{\ell=1}^L e_\ell\frac{D_xc_\ell}{c_\ell}\Bigr)\\
      \leq{}&
        \underbrace{\deg_yc_0{+}i\Bigl(\deg_y\sqfp b{+}\max\{\deg_y a,\deg_y b\}{+}\!\sum_{\ell=1}^L \deg_y c_\ell\Bigr)}
        _{\text{bounds $\deg_y N_i$}}
         +
        \underbrace{\deg_y \sqfp b{+}\deg_y b{+}\!\sum_{\ell=1}^L \deg_y c_\ell}
        _{\text{bounds $\deg_y v$}}\\
      \leq{}&\deg_yc_0+(i+1)\Bigl(\deg_y\sqfp b + \max\{\deg_y a,\deg_y b\} + \sum_{\ell=1}^L \deg_y c_\ell\Bigr)
    \end{alignat*}
    and
    \begin{alignat*}1
        &\deg_y\Bigl(N_i\Bigl((D_xa)\sqfp b-a\sqfp b\frac{D_xb}b\Bigr)\prod\limits_{\ell=1}^Lc_\ell\Bigr)\\
     \leq{}&
        \underbrace{\deg_yc_0{+}i\Bigl(\deg_y\sqfp b{+}\max\{\deg_y a,\deg_y b\}{+}\!\sum_{\ell=1}^L \deg_y c_\ell\Bigr)}
        _{\text{bounds $\deg_y N_i$}}
        +
        \underbrace{\deg_y\sqfp b {+} \deg_y a{+}\sum_{\ell=1}^L \deg_y c_\ell}
        _{\text{bounds $\deg_y$ of the other factors}}\\
     \leq{}&\deg_yc_0+(i+1)\Bigl(\deg_y\sqfp b + \max\{\deg_y a,\deg_y b\} + \sum_{\ell=1}^L \deg_y c_\ell\Bigr).
    \end{alignat*}
  \item Again, let $v:=  b\sqfp b\prod_{\ell=1}^L c_\ell$
    and write $m_i$ for the claimed value of $\deg_x N_i$.
    Then, like in part~\ref{lemma:7:1},
    \[
      \frac{D_x^{i+1}h}{h}=\frac{(D_xN_i)v-i N_iD_xv+ N_i\bigl(\prod\limits_{\ell=1}^Lc_\ell\bigr)\bigl((D_x
           a)\sqfp b-a\sqfp b\frac{D_x b}b\bigr) +N_iv\sum\limits_{\ell=1}^L e_\ell\frac{D_x
            c_\ell}{c_\ell}}{c_0v^{i+1}}.
    \]
    First consider the case $\omega\not\in\set N$ or $i\leq\omega$.

    Since $\deg_x a\leq\deg_x b$ by assumption, and because of
    \[
      (D_x a)\sqfp b-a\sqfp b\frac{D_x b}b=(\lc_x a)(\lc_x \sqfp b)(\deg_x a-\deg_x b)x^{\deg_x a+\deg_x\sqfp b-1}+\cdots,
    \]
    we now have
    \begin{alignat*}1
      &\deg_x \Bigl(N_i\Bigl((D_x a)\sqfp b-a\sqfp b\frac{D_x b}b\Bigr)\prod_{\ell=1}^L c_\ell\Bigr)\\
      &\quad{}< m_i + \deg_x b+\deg_x\sqfp b-1 + \sum_{\ell=1}^L \deg_x c_\ell=m_{i+1}.
    \end{alignat*}
    Note that this estimate is also strict when $\deg_x a=\deg_x b$ because
    the coefficient of $x^{\deg_x a+\deg_x\sqfp b-1}$ in $(D_x a)\sqfp b-a\sqfp b (D_x b)/b$
    contains the factor $\deg_x a-\deg_x b$, which vanishes in this case.

    Next, using the induction hypothesis, we have
    \begin{alignat*}1
        &(D_x N_i)v-i N_iD_x v+ N_i v\sum_{\ell=1}^L e_\ell\frac{D_xc_\ell}{c_\ell}\\
     ={}&(\lc_x N_i)(\lc_x v)\Bigl(\deg_x N_i- i\deg_x v + \sum_{\ell=1}^L e_\ell\deg_x c_\ell\Bigr)x^{\deg_x N_i+\deg_x v-1}+\cdots\\
     ={}&(\lc_x c_0)\Bigl(\lc_x b\sqfp b\prod_{\ell=1}^Lc_\ell\Bigr)^i
     \ff{\omega}i\,(\lc_x v)
           \Bigl(m_i - i\deg_x v + \sum_{\ell=1}^L e_\ell\deg_x c_\ell\Bigr)
           x^{m_i+\deg_x v-1}+\cdots\\
     ={}&(\lc_x c_0)\Bigl(\lc_x b\sqfp b\prod_{\ell=1}^Lc_\ell\Bigr)^{i+1}
           \ff{\omega}i
           \Bigl(\deg_xc_0+\sum_{\ell=1}^L e_\ell\deg_x c_\ell - i \Bigr)
           x^{m_{i+1}}+\cdots\\
     ={}&(\lc_x c_0)\Bigl(\lc_x b\sqfp b\prod_{\ell=1}^Lc_\ell\Bigr)^{i+1}
         \ff{\omega}{i+1}
           x^{m_{i+1}}+\cdots.
    \end{alignat*}
    Since $\ff\omega {i+1}\neq0$ when $\omega\not\in\set N$ or $i+1\leq\omega$, this completes
    the proof that $(D_x^{i+1}h)/h$ has the denominator as claimed and that its
    numerator has degree and leading coefficient with respect to $x$ as
    claimed. The degree bounds with respect to $y$ are shown exactly as in
    part~\ref{lemma:7:1}.

    Now consider the case where $\omega\in\set N$ and $i>\omega$. In this
    case, we start the induction at $i=\omega+1$. The induction base follows
    from the calculations carried out above for $i\geq\omega$, the fact
    $\ff\omega{\omega+1}=0$, and the definition of~$\delta$.
    (Note that $\ff\omega{\omega+1}=0$ also implies $\delta\geq1$.)
    For the induction step $i\mapsto i+1$, we have, similar as before,
    \[
     \deg_x\Bigl(N_i\Bigl((D_xa)\sqfp b-a\sqfp b\frac{D_xb}{b}\Bigr)\prod_{\ell=1}^Lc_\ell\Bigr)
        < m_{i+1}
    \]
    and
    \begin{alignat*}1
         &(D_xN_i)v-iN_iD_xv+N_iv\sum_{\ell=1}^Le_\ell\frac{D_xc_\ell}{c_\ell}\\
      ={}&(\lc_xN_i)(\lc_x v)\Bigl(m_i-i\deg_xv+\sum_{\ell=1}^L e_\ell\deg_x c_\ell\Bigr)x^{m_i+\deg_x v-1}+\cdots\\
      ={}&(\lc_xN_i)(\lc_x v)\Bigl(\deg_xc_0 +i(\deg_xv-1)-\delta-i\deg_xv+\sum_{\ell=1}^L e_\ell\deg_x c_\ell\Bigr)x^{m_{i+1}}+\cdots\\
      ={}&(\lc_x N_{\omega+1})
      \Bigl(\lc_x b\sqfp b\prod_{\ell=1}^L c_\ell\Bigr)^{i-(\omega+1)}
      \ff{(-\delta-1)}{i-(\omega+1)}
(\lc_x v)(\omega-\delta-i)x^{m_{i+1}}+\cdots\\
      ={}&(\lc_x N_{\omega+1})\Bigl(\lc_x b\sqfp b\prod_{\ell=1}^L c_\ell\Bigr)^{i+1-(\omega+1)}
         \ff{(-\delta-1)}{i+1-(\omega+1)}x^{m_{i+1}}+\cdots
    \end{alignat*}
    Because of $\delta>0$, the factor $\ff{(-\delta-1)}{i-(\omega+1)}$ is nonzero for all $i>\omega$.
    \qed
\end{enumerate}
\let\qed\empty 
\end{proof}

\begin{example} The case when $h=u/v$ is a rational function is covered by
  part~\ref{lemma:7:2} of the lemma.
  For example, for $h=(2x^5-3x^4+5)/(3x^3-4x+8)$ we
  can take $c_0=2x^5-3x^4+5$, $a=0$, $b=1$, $L=1$,
  $c_1=3x^3-4x+8$, $e_1=-1$. Direct calculation of the derivatives gives
  \begin{center}
    \medskip
    \begin{tabular}{c|c|c|c|c|c|c|c}
      $i$ & 0 & 1 & 2 & 3 & 4 & 5 & 6 \\\hline
      $\deg_x c_0c_1^i (D_x^i h)/h$ & 5 & 7 & 9 & 8 & 10 & 12 & 14 \\\hline
      $\lc_x c_0c_1^i (D_x^i h)/h$ & 2 & 12 & 36 & 1512 & $-18144$ & 272160 & $-4898880$
    \end{tabular}
    \medskip
  \end{center}
  The lemma makes no statement about the degree or leading coefficient
  in the case $i=\omega+1=3$, but knowing these, it correctly predicts
  all the other data in the table. In this example, we have $\delta=3=\omega+1$.
  This is not a coincidence, as we shall show next.
\end{example}

\begin{lemma}\label{lemma:delta}
  Let $h$ be a hyperexponential term with $\deg_x a\leq\deg_x b$, and let
  $\omega$ and $\delta$ be as in Lemma~\ref{lemma:7}.(\ref{lemma:7:2}),
  $\omega\in\set N$. Then $\delta\geq\omega+1$.
\end{lemma}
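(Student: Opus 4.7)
The strategy is to analyze the order at $x=\infty$ of the rational function $u_{\omega+1}:=(D_x^{\omega+1}h)/h\in\K(x,y)$. Setting $t=1/x$, I embed $\K(x,y)\hookrightarrow\K(y)((t))$; under this embedding $D_x$ acts as $-t^2 D_t$, and the $t$-valuation of a rational function equals $\deg_x(\text{denominator})-\deg_x(\text{numerator})$. A direct computation with $u:=(D_xh)/h=(D_xc_0)/c_0+D_x(a/b)+\sum_\ell e_\ell(D_xc_\ell)/c_\ell$ --- in which the middle term has $t$-order at least~$2$ precisely because $\deg_xa\leq\deg_xb$ --- yields $u=\omega t+\O(t^2)$.

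The plan is then to introduce a formal factorization of $h$ at infinity. Integrating $u-\omega/x$ inside $t^2\K(y)[[t]]$ produces a series $\phi\in 1+t\K(y)[[t]]$ (explicitly $\phi=\exp\bigl(-\sum_{k\geq1}(u_{k+1}/k)\,t^k\bigr)$ when $u=\sum_{k\geq1}u_kt^k$) such that $F:=x^\omega\phi\in\K(y)((t))$ satisfies $D_xF/F=u$. Since the two sequences $u_i:=(D_x^ih)/h$ and $D_x^iF/F$ both start at $1$ and obey the same recurrence $v_{i+1}=D_xv_i+u\cdot v_i$, they coincide in $\K(y)((t))$ for every $i\geq0$, so it suffices to bound the $t$-order of $D_x^{\omega+1}F/F$.

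The core calculation applies Leibniz together with $D_x^kt^j=\ff{-j}{k}t^{j+k}$: writing $\phi=\sum_j\phi_jt^j$ one finds
\[
D_x^i(x^\omega\phi)=\sum_j\phi_j\,t^{j+i-\omega}\sum_{k=0}^i\binom{i}{k}\ff{\omega}{i-k}\ff{-j}{k}=\sum_j\phi_j\,\ff{\omega-j}{i}\,t^{j+i-\omega},
\]
where Vandermonde's identity collapses the inner sum. Setting $i=\omega+1$ kills every summand with $0\leq j\leq\omega$, since in that range the product $(\omega-j)(\omega-j-1)\cdots(-j)$ passes through $0$. Hence $D_x^{\omega+1}F$ has $t$-order at least $\omega+2$, and dividing by $F$ (whose inverse has $t$-order $\omega$) gives $u_{\omega+1}\in t^{2\omega+2}\K(y)[[t]]$.

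Translating back, $u_{\omega+1}=N_{\omega+1}/(c_0 v^{\omega+1})$ with $v=b\sqfp b\prod_\ell c_\ell$ yields $\deg_x c_0+(\omega+1)\deg_x v-\deg_x N_{\omega+1}\geq 2\omega+2$, and substituting into $\delta=\deg_x c_0+(\omega+1)(\deg_x v-1)-\deg_x N_{\omega+1}$ gives $\delta\geq\omega+1$. The one delicate point is the formal factorization $F=x^\omega\phi$: because $h$ itself does not live in $\K(y)((t))$, one cannot identify $h$ with $F$ directly, but the coincidence of the recurrences for $u_i$ and $D_x^iF/F$ forces them to agree inside $\K(y)((t))$, after which the Leibniz/Vandermonde computation and the final degree bookkeeping are routine.
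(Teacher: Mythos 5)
Your proof is correct, but it takes a genuinely different route from the one in the paper. Both arguments pivot on splitting off the factor $x^\omega$, but the paper does this symbolically: it rewrites $h=x^\omega\bar h$ with $\bar h$ of exponent zero and then inducts on $\omega$, applying Leibniz to $D_x^{\omega+2}\bigl(x\cdot x^\omega\bar h\bigr)$ and using the explicit value of $\lc_x N_{\omega+2}$ from Lemma~\ref{lemma:7} to exhibit the cancellation of the two leading terms in the numerator of~\eqref{eq:5}. You instead work locally at $x=\infty$. Your identification of $(D_x^ih)/h$ with $D_x^iF/F$ via the common recurrence $v_{i+1}=D_xv_i+u\,v_i$ is exactly the right device for sidestepping the fact that $h$ itself has no image in $\K(y)((t))$, and once $F=x^\omega\phi$ with $\phi\in 1+t\K(y)[[t]]$ is in hand, the closed form $D_x^{\omega+1}(x^\omega\phi)=\sum_j\phi_j\,\ff{(\omega-j)}{\omega+1}\,t^{j+1}$ annihilates every term with $0\le j\le\omega$ in one stroke, so no induction is needed; the final bookkeeping $\delta\ge(2\omega+2)-(\omega+1)=\omega+1$ is then immediate because the $t$-valuation of a quotient of polynomials is exactly the $x$-degree of the denominator minus that of the numerator. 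As to what each approach buys: the paper's proof uses nothing beyond the data already recorded in Lemma~\ref{lemma:7} and stays entirely inside $\K[x,y]$, which matches the style of the surrounding degree computations; yours is non-inductive and conceptually more transparent, since it identifies $\omega$ as the exponent of $h$ at infinity and the degree drop $\delta\ge\omega+1$ as an indicial phenomenon, namely the vanishing of $\ff{(\omega-j)}{\omega+1}$ for $j=0,\dots,\omega$. Both are complete proofs; yours would serve as a valid replacement.
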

\begin{proof}
  Rewrite $h=c_0\exp(\frac ab)\prod_{\ell=1}^L c_\ell^{e_\ell}= \bar
  c_0\exp(\frac ab)\prod_{\ell=1}^{L+2}\bar c_\ell^{\bar e_\ell}$ with $\bar
  c_0=x^\omega$, $\bar c_\ell=c_\ell$ ($\ell=1,\dots,L$), $\bar e_\ell=e_\ell$
  ($\ell=1,\dots,L$), $\bar c_{L+1}=c_0$, $\bar e_{L+1}=1$, $\bar c_{L+2}=x$,
  $\bar e_{L+2}=-\omega$.  The rational functions $(D_x^ih)/h$ are of course
  independent of the representation of~$h$, but the representations of these
  rational functions which are given in Lemma~\ref{lemma:7} are not. The
  representation obtained for the new representation of $h$ is obtained from the
  original representation by multiplying numerator and denominator by
  $x^{\omega+i}c_0^{i-1}$.  Observe that this modification does not influence
  the values for $\omega$ or~$\delta$.  It is therefore sufficient to prove the
  claim for terms of the form $h=x^\omega\bar h$, where $\bar h$ is some hyperexponential
  term for which the value of $\omega$ is zero.  We do so by induction
  on~$\omega$.  For $\omega=0$, we have $\delta\geq 1=\omega+1$ already
  by Lemma~\ref{lemma:7}.(\ref{lemma:7:2}).  Now assume that $\omega\geq0$
  is such that for $x^\omega\bar h$ the degree drop $\bar\delta$ is $\omega+1$
  or more.  Then for $h=x^{\omega+1}\bar h=x(x^\omega\bar h)$ we have
  $D_xh=x^\omega\bar h + xD_x(x^\omega\bar h)$, $D_x^2h=2D_x(x^\omega\bar h) +
  xD_x^2(x^\omega\bar h)$, and so on, all the way down to
  \begin{alignat}1
   D_x^{\omega+2}h&=(\omega+2)D_x^{\omega+1}(x^\omega\bar h) + x D_x^{\omega+2}(x^\omega\bar h)\notag\\
    &=(\omega+2)\frac{N_{\omega+1}}{x^\omega v^{\omega+1}}x^\omega\bar h + x \frac{N_{\omega+2}}{x^\omega v^{\omega+2}}x^\omega\bar h\notag \\
    &=\frac{(\omega+2) N_{\omega+1}v + x N_{\omega+2}}{v^{\omega+2}}\bar h,
    \label{eq:5}
  \end{alignat}
  where $N_{\omega+1}$ and $N_{\omega+2}$ are as in Lemma~\ref{lemma:7} and
  $v$ refers to the denominator stated there.
  If $\delta$ denotes the degree drop for~$h$, then this calculation implies
  $\delta\geq\bar\delta$.
  By induction hypothesis, we have $\bar\delta\geq\omega+1$. If in fact $\bar\delta\geq\omega+2$,
  then we are done. Otherwise, if $\bar\delta=\omega+1$, then
  \[
    \lc_x N_{\omega+2} = (-\bar\delta - 1)\lc_x N_{\omega+1} \lc_x v
                       = -(\omega+2)\lc_x N_{\omega+1}\lc_x v
  \]
  by Lemma~\ref{lemma:7}, so the leading terms of the two polynomials in the numerator of~\eqref{eq:5}
  cancel, and therefore $\delta>\omega+1$ also in this case.
\end{proof}

Experiments suggest that the bound in Lemma~\ref{lemma:delta} is tight in the
sense that we have $\delta=\omega+1$ for almost all hyperexponential terms~$h$.
But there do exist situations with $\delta>\omega+1$. For example, it can
be shown that for $h=c_0\exp(a/b)$ with $\deg_xb-\deg_xa>\deg_xc_0=\omega$ we
have $\delta\geq\deg_xb-\deg_xa$.

Also Lemma~\ref{lemma:7} is not necessarily sharp for degenerate choices
of~$h$.  In particular, we do not claim that the numerators and denominators
stated in Lemma~\ref{lemma:7} are coprime. It may be possible to carry out a
finer analysis by considering the square free decomposition of~$c_0$, or by
taking into account possible common factors between $b$ and the~$c_\ell$, or by
handling the $c_\ell$ which do not involve $x$ separately. For our purpose, we
believe that the statements given above form a reasonable compromise between
sharpness of the statements and readability of the derivation.

Several aspects of the formulas in Lemma~\ref{lemma:7} are important. One of
them is that the denominators corresponding to lower derivatives divide
those corresponding to higher derivatives. This has the consequence that
when the linear combination $Ph$ is brought on a common denominator, the degree
of the numerator will not grow drastically. In a sense, this fact is the main
reason why creative telescoping works at all. Our next step is to bring the
formulas from Lemma~\ref{lemma:7} on a common denominator.

\begin{lemma}\label{lemma:8} Let $h$ be a hyperexponential term and
  $r,i\in\set Z$ with $r\geq i\geq 0$.
\begin{enumerate}
\item\label{lemma:8:1} If $\deg_x a > \deg_x b$, then
  \[
   \frac{D_x^i h}h = \frac{N_{r,i}}{c_0\bigl( b\sqfp b\prod_{\ell=1}^L c_\ell\bigr)^r}
  \]
  for some $N_{r,i}\in\K[x,y]$ with
  \begin{alignat*}1
    \deg_x N_{r,i}&=
      \deg_xc_0+r\Bigl(\deg_x\sqfp b+\deg_x b+\sum_{\ell=1}^L\deg_xc_\ell\Bigr)+i\Bigl(\deg_x a - \deg_x b - 1\Bigr)
    \\[-3pt]
    \deg_y N_{r,i}&\leq
    \deg_yc_0+r\Bigl(\deg_y\sqfp b + \max\{\deg_y a,\deg_y b\} + \sum_{\ell=1}^L \deg_y c_\ell\Bigr),
    \\[-3pt]
    \lc_x N_{r,i}&=(\lc_xc_0)(\lc_x a)^i (\lc_x b)^{r-i} (\lc_x\sqfp b)^r \Bigl(\lc_x \prod\limits_{\ell=1}^L c_\ell\Bigr)^r
              \bigl(\deg_x a-\deg_x b\bigr)^i.
  \end{alignat*}
\item\label{lemma:8:2} If $\deg_x a\leq\deg_x b$, then
  \[
   \frac{D_x^i h}{h} = \frac{N_{r,i}}{c_0\bigl( b\sqfp b\prod_{\ell=1}^L c_\ell\bigr)^r}
  \]
  for some $N_{r,i}\in\K[x,y]$ with
  \begin{alignat*}1
    \deg_xN_{r,i}&=\deg_xc_0+r\Bigl(\deg_x\sqfp b+\deg_x b+\sum_{\ell=1}^L \deg_x c_\ell\Bigr) - i
       -\iverson{\omega\in\set N\land i>\omega}\delta,\\[-3pt]
    \deg_yN_{r,i}&\leq\deg_yc_0+ r\Bigl(\deg_y\sqfp b + \max\{\deg_y a,\deg_y b\} + \sum_{\ell=1}^L \deg_y c_\ell\Bigr),\\[-3pt]
    \lc_xN_{r,i}&=\left\{\begin{array}{ll}
(\lc_xc_0)(\lc_x b \sqfp b)^{r}\bigl(\lc_x\prod\limits_{\ell=1}^L c_\ell\bigr)^r
       \ff{\omega}i&\quad\text{if $\omega\not\in\set N$ or $i\leq\omega$;}\\
           (\lc_x N_{\omega+1})
           \bigl(\lc_x b\sqfp b\prod\limits_{\ell=1}^Lc_\ell\bigr)^{r-(\omega+1)}
           \ff{(-\delta-1)}{i-(\omega+1)}
       &\quad\text{if $\omega\in\set N$ and $i>\omega$,}
       \end{array}\right.
  \end{alignat*}
  where $\omega$, $\delta$, and $\lc_x N_{\omega+1}$ are as in Lemma~\ref{lemma:7}.(\ref{lemma:7:2}).
\end{enumerate}
\end{lemma}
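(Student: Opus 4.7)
The proof is essentially a bookkeeping exercise on top of Lemma~\ref{lemma:7}. The plan is to take the representation of $(D_x^i h)/h$ given there, which has denominator $c_0(b\sqfp b\prod_{\ell=1}^L c_\ell)^i$, and simply artificially inflate the denominator to the uniform exponent~$r$ by multiplying numerator and denominator by $(b\sqfp b\prod_{\ell=1}^L c_\ell)^{r-i}$. Concretely, I would set
\[
  N_{r,i} := N_i\cdot\Bigl(b\sqfp b\prod_{\ell=1}^L c_\ell\Bigr)^{r-i},
\]
where $N_i$ is the numerator from Lemma~\ref{lemma:7}. Then the shape of the fraction is immediate, and every claim about $N_{r,i}$ reduces to a claim about $N_i$ times the corresponding property of $(b\sqfp b\prod_{\ell=1}^L c_\ell)^{r-i}$.

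For the $x$-degree in Part~\ref{lemma:8:1}, I would combine $\deg_xN_i$ from Lemma~\ref{lemma:7}.(\ref{lemma:7:1}) with the fact that $\deg_x(b\sqfp b\prod_\ell c_\ell)^{r-i}=(r-i)(\deg_xb+\deg_x\sqfp b+\sum_\ell\deg_xc_\ell)$ and collect the $i$-dependent terms. The $r$-independent part combines into $r(\deg_x\sqfp b+\deg_xb+\sum_\ell\deg_xc_\ell)$, and the remaining $i$-dependent part is $i(\deg_xa-\deg_xb-1)$, which is exactly what the lemma claims. For the leading coefficient, the leading coefficient of the product is the product of the leading coefficients, and everything collapses to the stated formula. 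Part~\ref{lemma:8:2} proceeds in the same way, distinguishing the two subcases according to whether $\omega\not\in\set N\lor i\leq\omega$ or $\omega\in\set N\land i>\omega$; the Iverson bracket and the $\delta$-shift carry over unchanged from Lemma~\ref{lemma:7}.(\ref{lemma:7:2}) because multiplying by a fixed factor $(b\sqfp b\prod_\ell c_\ell)^{r-i}$ does not interact with those case distinctions.

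The only place where one has to be slightly careful is the $y$-degree bound, because the bound on $\deg_yN_i$ in Lemma~\ref{lemma:7} involves $\max\{\deg_ya,\deg_yb\}$ whereas $\deg_y(b\sqfp b\prod_\ell c_\ell)^{r-i}$ involves $\deg_yb$. Here I would simply use the trivial estimate $\deg_yb\leq\max\{\deg_ya,\deg_yb\}$, giving
\[
  \deg_yN_{r,i}\leq\deg_yc_0+r\Bigl(\deg_y\sqfp b+\max\{\deg_ya,\deg_yb\}+\sum_{\ell=1}^L\deg_yc_\ell\Bigr),
\]
which is precisely the claim. No other estimate is needed.

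There is no substantial obstacle: the only thing to watch out for is not to confuse the two different $N_i$'s (the one from Lemma~\ref{lemma:7} and the new $N_{r,i}$), and to verify that the formulas for the leading coefficient really collapse correctly in the case $\omega\in\set N$ and $i>\omega$ (where the Lemma~\ref{lemma:7} formula already absorbs the factor $(\lc_xb\sqfp b\prod_\ell c_\ell)^{i-(\omega+1)}$ and we are merely appending the additional $(\lc_xb\sqfp b\prod_\ell c_\ell)^{r-i}$ to obtain the exponent $r-(\omega+1)$ stated in the lemma). Once that is seen, the proof is just algebraic verification.
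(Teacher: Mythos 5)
Your proposal is correct and is exactly the paper's argument: the paper's proof consists of the single sentence that both parts follow from Lemma~\ref{lemma:7} by multiplying numerator and denominator by $\bigl(b\sqfp b\prod_{\ell=1}^L c_\ell\bigr)^{r-i}$, and your bookkeeping (including the observation that the $y$-degree bound needs $\deg_y b\leq\max\{\deg_y a,\deg_y b\}$) fills in the details accurately.
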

\begin{proof}
 Both parts follow directly from the respective
 parts of Lemma~\ref{lemma:7} by multiplying numerator and denominator of the
 representations stated there by $\bigl( b\sqfp b\prod_{\ell=1}^L c_\ell\bigr)^{r-i}$.
\end{proof}

Since we will be frequently referring to the quantities in this lemma, it seems
convenient to adopt the following definition.

\begin{definition}\label{def:8}
 For a hyperexponential term~$h$, let
  \begin{alignat*}4
    \alpha &= \deg_x\sqfp b + \deg_x b + \sum_{\ell=1}^L\deg_x c_\ell,&\qquad
     \beta &= \deg_x a-\deg_x b - 1,\\[-3pt]
    \gamma &= \deg_y\sqfp b + \max\{\deg_ya,\deg_yb\} + \sum_{\ell=1}^L\deg_y c_\ell,&
    \omega &= \deg_xc_0+\sum_{\ell=1}^L e_\ell\deg_x c_\ell.
  \end{alignat*}
  If $\deg_xa\leq\deg_xb$ and $\omega\in\set N$, we further let $\delta$ be any integer with
  \[
   \omega+1\leq\delta\leq\deg_xc_0+(\omega+1)(\alpha-1)-
     \deg_x \biggl(c_0\Bigl( b\sqfp b\prod_{\ell=1}^Lc_\ell\Bigr)^{\omega+1}\frac{D_x^{\omega+1}h}h\biggr).
  \]
  Otherwise, if $\deg_xa>\deg_xb$ or $\omega\not\in\set N$, let $\delta=0$. Finally,
  we define the following flags:
  \begin{alignat*}1
    &\phi_1= \iverson{\tfrac{\lc_xa}{\lc_xb}\in\K},\qquad
    \phi_2= \iverson{\tfrac{\lc_xa}{\lc_xb}\in\K\land\beta=0},\\
    &\phi_3= \iverson{\,\tfrac ab\in\K(x)
      \land\forall\ \ell: (\deg_yc_\ell=0\lor e_\ell\in\set Z)
      \land
      \deg_yc_0\geq{\textstyle\sum\limits_{\ell=1}^L}e_\ell\deg_yc_\ell}\,.
  \end{alignat*}
\end{definition}

Note that none of these parameters depends on $r$ or~$i$.  The flags $\phi_k$
($k=1,2,3$) are in $\{0,1\}$, $\omega$ belongs to~$\K$, $\beta$ belongs to $\set
N\cup\{-1\}$, and all other parameters are positive integers.  The best
  value for $\delta$ is the right bound of the specified range, but since this
  value cannot be directly read of from the input, we do not insist that
  $\delta$ be equal to this value, but we allow $\delta$ to be any number
  between the bound from Lemma~\ref{lemma:delta} and the true degree drop. The
flags $\phi_1$ and $\phi_2$ will be used below in the ansatz for the telescoper,
$\phi_3$ will play a role afterwards in the ansatz for the certificate.

In terms of the parameters defined in Definition~\ref{def:8},
the degree bounds of Lemma~\ref{lemma:8} simplify to
\begin{alignat*}1
  &\deg_x N_{r,i}\leq\deg_xc_0+\alpha r+\max\{\beta,-1\} i-\iverson{\omega\in\set N\land i>\omega}\delta, \\
  &\deg_y N_{r,i}\leq\deg_yc_0+ \gamma r.
\end{alignat*}

\subsection{The Ansatz for the Telescoper}\label{sec:3.1}

Lemma~\ref{lemma:8} suggests reasonable choices for the degrees $d_i$ in
the ansatz for~$P$. In particular, our choice is based on the following
features of the formulas in Lemma~\ref{lemma:8}.

\begin{itemize}
\item The degree of the numerator in $(D_x^i h)/h$ varies with~$i$.
  A good choice for the degrees~$d_i$ will compensate for this variation,
  taking higher values for $d_i$ when the numerator of $(D_x^ih)/h$ has low degree,
  and vice versa. This is the key idea of the Verbaeten completion
  \citep{verbaeten74,verbaeten76,wegschaider97}.
\item The leading coefficients of $N_{r,i}$ ($i>0$) are polynomials in~$y$, but in case
  \ref{lemma:8:2}, most of them are $\K$-multiples of each
  other. When $a$ and $b$ are such that $(\lc_x a)/(\lc_x b)\in\K$, then this is
  also true in case~\ref{lemma:8:1}. We will use this fact for eliminating several
  equations at the cost of a single variable.
\end{itemize}

Before describing the ansatz for $P$ in full generality, we motivate the
construction by an example.

\begin{example}\label{ex:8}
  Suppose that $h$ is hyperexponential with $\lc_xa=\lc_xb$, $\beta=1$
  (case~\ref{lemma:8:1} of Lemma~\ref{lemma:8}), and $\deg_xc_0=0$.

  Let $r=5$ and $d=7$. We want to choose $d_i$ such that $\max_{i=0}^5 d_i=7$ and
  the ansatz
  \[
    P = \sum_{i=0}^5\sum_{j=0}^{d_i} p_{i,j}x^j D_x^i
  \]
  leads to ``many'' variables but only ``few'' equations. The choice with most
  variables is clearly to set $d_i=d=7$ for all~$i$. But this ansatz leads to
  quite many equations. Each term $x^j D_x^i$ contributes to the common
  numerator a polynomial $x^j N_{5,i}$ whose degree in~$x$ is $5\alpha+i+j$ and
  whose degree in~$y$ is at most~$5\gamma$. Because of the term $x^7 D_x^5$, we
  must expect up to $(5\alpha+13)(5\gamma+1)$ terms in the numerator.  This is
  the expected number of equations in the linear system resulting from
  coefficient comparison.

  If we remove the term $x^7 D_x^5$ from the ansatz, i.e., if we choose
  $d_0=\cdots=d_4=7$, $d_5=6$, then the number of equations drops to
  $(5\alpha+12)(5\gamma+1)$ because all terms $x^j D_x^i$ other than $x^7 D_x^5$
  contribute only polynomials $x^j N_{5,i}$ of lower degree.
  We save $5\gamma+1$ equations at the cost of removing a single variable.
  Removing also the terms $x^7 D_x^4$ and $x^6 D_x^5$ lowers the number of
  equations further to $(5\alpha+11)(5\gamma+1)$, and in general, for any $0\leq
  w\leq 5$, choosing $d_i=7-(w+i-5)^+$ ($i=0,\dots,5$) leads to
  $(5\alpha+13-w)(5\gamma+1)$ equations. The number of variables is
  $(5+1)(7+1)-\sum_{k=1}^w k = 48 - \tfrac12 w(w+1)$.

  If $w>1$, we can introduce $w-1$ new variables by exploiting the second
  feature of the formulas in Lemma~\ref{lemma:8} as follows. Consider
  the choice $w=3$, i.e., the terms $x^jD_x^i$ with $i+j\geq10$
  have been removed from the ansatz. We reintroduce the terms
  $x^7D_x^3$, $x^6D_x^4$, $x^5D_x^5$ by adding
  \[
    p_{3,7}\bigl((\deg_x a-\deg_x b)^2 x^7 D_x^3 - x^5 D_x^5\bigr) +
    p_{4,6}\bigl((\deg_x a-\deg_x b) x^6 D_x^4 - x^5 D_x^5\bigr)
  \]
  to the ansatz, getting back the two variables $p_{3,7}$ and $p_{4,6}$
  but no new equations, because,
  according to Lemma~\ref{lemma:8}.\eqref{lemma:8:1}, the assumption
  $\lc_xa=\lc_xb$ implies
  \[
    (\deg_x a-\deg_x b)^2\lc_x N_{5,3} = \lc_x N_{5,5}
    \quad\text{and}\quad
    (\deg_x a-\deg_x b)\lc_x N_{5,4} = \lc_x N_{5,5}.
  \]
  The final ansatz is depicted in Figure~\ref{fig:2}. A bullet at $(i,j)$
  represents a variable $p_{i,j}$ in the ansatz. White bullets correspond
  to the reintroduced variables $p_{3,7}$ and $p_{4,6}$ which do not
  affect the number of equations.

  \begin{figure}
    \centerline{%
    \begin{picture}(105,135)(-15,-15)
      \multiput(0,0)(15,0){6}{\circle*{5}}
      \multiput(0,15)(15,0){6}{\circle*{5}}
      \multiput(0,30)(15,0){6}{\circle*{5}}
      \multiput(0,45)(15,0){6}{\circle*{5}}
      \multiput(0,60)(15,0){6}{\circle*{5}}
      \multiput(0,75)(15,0){5}{\circle*{5}}
      \multiput(0,90)(15,0){4}{\circle*{5}}
      \multiput(0,105)(15,0){3}{\circle*{5}}
      \multiput(60,90)(-15,15){2}{\circle{5}}
      \put(-5,0){\line(-1,0){5}}\put(-14,-2){\llap{\footnotesize $0$}}
      \put(-5,105){\line(-1,0){5}}\put(-14,103){\llap{\footnotesize $d{=}7$}}
      \put(0,-5){\line(0,-1){5}}\put(0,-18){\clap{\footnotesize $0$}}
      \put(75,-5){\line(0,-1){5}}\put(75,-18){\clap{\footnotesize $r{=}5$}}
      \put(40,110){$\overbrace{\kern45pt}^{w=3}$}
    \end{picture}}

    \caption{The ansatz for $P$ discussed in Example~\ref{ex:8}}\label{fig:2}
  \end{figure}
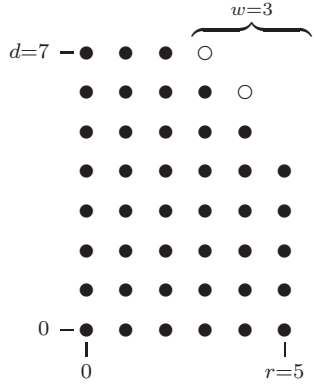

\end{example}

The general form of our ansatz for the telescoper is given in the following
lemma. The first case is like in the example above when $\beta>0$. For $\beta=0$,
no degree compensation is possible because all $N_{r,i}$ have the same degree.
But if $(\lc_x a)/(\lc_x b)\in\K$, it is still possible to save some equations
by exploiting the linear dependence among the leading terms. In the second
case, there is always a degree compensation possible, but unlike in the example
above, terms are removed for indices~$i$ close to zero rather than close to~$r$.
When $\omega\in\set N$, we provide an alternative ansatz which takes the degree drop
$\delta$ into account. Common to all cases are the two basic principles of
choosing $d_i$ such as to compensate for the different degrees of the $N_{r,i}$
in Lemma~\ref{lemma:8}, and of installing some additional variables by
exploiting the knowledge about the leading terms of the~$N_{r,i}$. For the size
of the cutoff, we use a new integer parameter~$w$, whose optimal value will be
determined later.

\begin{lemma}\label{lemma:9} Let $h$ be a hyperexponential term, $r\geq1$, $d\geq0$.
\begin{enumerate}
\item\label{lemma:9:1} Suppose that $\deg_xa>\deg_xb$.
  Let $0\leq w \leq \min\{r, d/\beta\}$ ($w:=0$ if $\beta=0$),
  $d_i:=d - \beta(w+i-r)^+ -\phi_2$ ($i=0,\dots,r$), and
  \begin{alignat*}1
    P &= \sum_{i=0}^r \sum_{j=0}^{d_i} p_{i,j} x^j D_x^i\\
      &\quad{} + \iverson{\beta\neq0}\phi_1
        \sum_{i=r-w+1}^{r-1} p_{i,d_i+1}\Bigl(\bigl(\tfrac{\lc_x a}{\lc_x b}(\beta+1)\bigr)^i x^{d_i+1}D_x^i - x^{d_r+1} D_x^r\Bigr)\\
      &\quad{} + \phi_2 \sum_{i=0}^{r-1} p_{i,d_i+1}\Bigl(\bigl(\tfrac{\lc_x a}{\lc_x b}\bigr)^i x^{d_i+1}D_x^i - x^{d_r+1} D_x^r\Bigr).
  \end{alignat*}
  Let $N=c_0\bigl(b^\ast b\prod_{\ell=1}^Lc_\ell\bigr)^r(Ph)/h$.
  Then
  \[
    \deg_x N\leq \deg_xc_0+ d + (\alpha+\beta)r-\beta w -\phi_2
    \quad\text{and}\quad
    \deg_y N\leq \deg_yc_0+ \gamma r.
  \]
\item\label{lemma:9:2} Suppose that $\deg_xa\leq\deg_xb$.
  Let $0\leq w\leq\min\{d+1,r+1\}$. Let $d_i:=d-(w-i)^+$ ($i=0,\dots,r$), and
  \begin{alignat*}1
    P &= \sum_{i=0}^r \sum_{j=0}^{d_i} p_{i,j}x^j D_x^i
    +  \sum_{i=1}^{w-1} p_{i,d_i+1}\Bigl(x^{d_i+1}D_x^i - \ff{\omega}{i}x^{d_0+1}\Bigr).
  \end{alignat*}
  Let $N=c_0\bigl(b^\ast b\prod_{\ell=1}^Lc_\ell\bigr)^r(Ph)/h$.
  Then
  \begin{alignat*}1
    \deg_x N\leq \deg_xc_0+d+\alpha r-w\quad\text{and}\quad
    \deg_y N\leq \deg_yc_0+\gamma r.
  \end{alignat*}
\item[(2$'$)]\label{lemma:9:2b} Suppose that $\deg_xa\leq\deg_xb$ and $\omega\in\set N$.
  Let $\omega\leq w\leq\min\{d-\delta+1,r+1\}$.
  Let $d_i:=d-(w-i)^+-\iverson{i\leq\omega}\delta$ ($i=0,\dots,r$), and
  \begin{alignat*}1
    P &= \sum_{i=0}^r \sum_{j=0}^{d_i} p_{i,j}x^j D_x^i
      + \sum_{i=1}^{\omega} p_{i,d_i+1}\Bigl(x^{d_i+1}D_x^i - \ff{\omega}{i}x^{d_0+1}\Bigr)\\
      &\qquad{} + \sum_{i=\omega+2}^{w-1} p_{i,d_i+1}
           \Bigl(x^{d_i+1}D_x^i - \ff{(-\delta-1)}{i-(\omega+1)}x^{d_{\omega+1}+1}D_x^{\omega+1}\Bigr).
  \end{alignat*}
  (See Figure~\ref{fig:3} for an illustration of the shape of $P$ in this case.)

  Let $N=c_0\bigl(b^\ast b\prod_{\ell=1}^Lc_\ell\bigr)^r(Ph)/h$.
  Then
  \[
    \deg_x N\leq \deg_xc_0+d+\alpha r-w-\delta
    \quad\text{and}\quad
    \deg_y N\leq \deg_yc_0+\gamma r.
  \]
\end{enumerate}
\end{lemma}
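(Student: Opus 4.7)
The plan is uniform across the three parts: expand $N=c_0(b\sqfp b\prod_{\ell=1}^L c_\ell)^r(Ph)/h$ using Lemma~\ref{lemma:8} to recognize it as a $\K[x]$-linear combination of the $N_{r,i}$, then bound $\deg_x N$ and $\deg_y N$ term by term.  Writing $v=b\sqfp b\prod_{\ell=1}^L c_\ell$ and $p_i(x)=\sum_j p_{i,j}x^j$, linearity of $D_x$ gives
\[
  N \;=\; \sum_{i=0}^r p_i(x)\,N_{r,i} \;+\; R,
\]
where $R$ collects the contributions of the correction sums.  Since Lemma~\ref{lemma:8} bounds $\deg_y N_{r,i}\le\deg_y c_0+\gamma r$ uniformly in $i$, and the correction terms are again $\K[x]$-combinations of the same $N_{r,i}$, the $y$-degree bound $\deg_y N\le\deg_y c_0+\gamma r$ is immediate in all three parts.

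For the $x$-degree in part~\ref{lemma:9:1}, Lemma~\ref{lemma:8}.(\ref{lemma:8:1}) gives $\deg_x N_{r,i}=\deg_x c_0+\alpha r+\beta i$, so the standard term $p_{i,j}x^j N_{r,i}$ with $j\le d_i$ has $x$-degree at most $d_i+\deg_x c_0+\alpha r+\beta i$; substituting $d_i=d-\beta(w+i-r)^+-\phi_2$ makes this bounded by $\deg_x c_0+d+(\alpha+\beta)r-\beta w-\phi_2$ uniformly in $i$, with the worst case attained at $i=r$.  Each correction bracket $c_i\,x^{d_i+1}D_x^i - x^{d_r+1}D_x^r$ individually contributes two monomials of $x$-degree exactly one more than this bound, so what is needed is that the scalar $c_i$ specified in the ansatz satisfies $c_i\,\lc_x N_{r,i}=\lc_x N_{r,r}$; inserting the explicit form of $\lc_x N_{r,i}$ from Lemma~\ref{lemma:8}.(\ref{lemma:8:1}) and using $\deg_x a-\deg_x b=\beta+1$ verifies this identity in both the $\phi_1$ and $\phi_2$ cases.

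Parts~\ref{lemma:9:2} and~(2$'$) are handled analogously using Lemma~\ref{lemma:8}.(\ref{lemma:8:2}).  For indices $i\le\omega$ (or whenever $\omega\notin\set N$), $\lc_x N_{r,i}$ is a $\K$-multiple of the fixed polynomial $(\lc_x c_0)(\lc_x b\sqfp b)^r(\lc_x\prod_{\ell}c_\ell)^r$ with proportionality constant $\ff{\omega}{i}$; for $i>\omega$ (when $\omega\in\set N$) it is a $\K$-multiple of $\lc_x N_{r,\omega+1}$ with constant $\ff{(-\delta-1)}{i-(\omega+1)}$.  The correction brackets are rigged so that their two summands yield monomials of the same $x$-degree whose leading coefficients differ exactly by the relevant scalar, so their top-degree terms cancel.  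In part~\ref{lemma:9:2} the choice $d_i=d-(w-i)^+$ makes every surviving contribution land at $x$-degree at most $\deg_x c_0+d+\alpha r-w$; for indices $i>\omega$ (when $\omega\in\set N$) the explicit correction contributes nothing because $\ff{\omega}{i}=0$, but this is also not needed, since the additional $-\delta$ already present in $\deg_x N_{r,i}$ supplies the required reduction on its own.  In part~(2$'$), the more aggressive choice $d_i=d-(w-i)^+-\iverson{i\le\omega}\delta$ and the splitting of the correction sum at $i=\omega+1$ push the common $x$-degree bound down to $\deg_x c_0+d+\alpha r-w-\delta$.

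The main obstacle is the bookkeeping in part~(2$'$), where the degree drop at $i=\omega+1$ forces two regimes for both $d_i$ and the correction sum: for $1\le i\le\omega$ one cancels against $D_x^0$ using $\ff{\omega}{i}$, while for $\omega+2\le i\le w-1$ one cancels against $D_x^{\omega+1}$ using $\ff{(-\delta-1)}{i-(\omega+1)}$.  Checking that both regimes produce contributions of exactly the same claimed $x$-degree, and that the scalars in the correction brackets indeed match the ratios $\lc_x N_{r,r}/\lc_x N_{r,i}$, $\lc_x N_{r,0}/\lc_x N_{r,i}$, and $\lc_x N_{r,\omega+1}/\lc_x N_{r,i}$ predicted by Lemma~\ref{lemma:8}, is elementary but case-heavy.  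Once these leading-coefficient identities are in place, the degree bounds follow by taking a maximum over~$i$.
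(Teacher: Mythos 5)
Your proof follows essentially the same route as the paper's: apply Lemma~\ref{lemma:8} term by term, use the degree formulas for $N_{r,i}$ together with the choice of the $d_i$ to bound the double-sum contributions, and use the leading-coefficient formulas to cancel the top-degree terms of the correction brackets. Your explicit treatment of the subcase $\omega\in\set N$, $i>\omega$ in part~\ref{lemma:9:2} (where $\ff{\omega}{i}=0$ and the bound instead follows from the extra $-\delta$ in $\deg_x N_{r,i}$ together with $\delta\geq1$) is in fact slightly more careful than the paper's own argument, which asserts the degree equality and leading-coefficient identity there without this case distinction.
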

\begin{proof}
\begin{enumerate}
\item We apply Lemma~\ref{lemma:8}.(\ref{lemma:8:1}) to each term in the ansatz for~$P$.
  The claim about $\deg_y N$ follows directly from the bound on $\deg_y N_{r,i}$ there.
  For the bound on $\deg_x N$, first observe that
  \begin{alignat*}1
    \deg_x x^j N_{r,i}&\leq d_i + \deg_x c_0+\alpha r + \beta i\\
      &= \deg_xc_0 + d +\alpha r + \beta i- \beta(w+i-r)^+ -\phi_2 \\
      &= \deg_xc_0 + d + \alpha r + \beta( i - \max\{w+i-r,0\} ) -\phi_2 \\
      &\leq \deg_xc_0+d+\alpha r+\beta(r-w) -\phi_2
  \end{alignat*}
  for all $i,j$ with $0\leq i\leq r$ and $0\leq j\leq d_i$. This settles the terms
  coming from the double sum. For the terms in the first single sum, which only appears
  when $\beta\neq0$, we have
  \begin{alignat*}1
     &\deg_x x^{d_i+1}N_{r,i}
     =\deg_x x^{d_r+1}N_{r,r}\\
     ={}&\deg_xc_0+d+\alpha r+\beta(r-w) + 1 -\phi_2
  \end{alignat*}
  and
  \[
   \bigl(\tfrac{\lc_x a}{\lc_x b}(\beta+1)\bigr)^i \lc_x N_{r,i} = \lc_x N_{r,r}
  \]
  for $i=r-w+1,\dots,r-1$. This implies
  \begin{alignat*}1
    &\deg_x\Bigl(\bigl(\tfrac{\lc_x a}{\lc_x b}(\beta+1)\bigr)^i x^{d_i+1}N_{r,i}
        -  x^{d_r+1}N_{r,r}\Bigr)\leq\deg_xc_0+d+\alpha r+\beta(r-w)-\phi_2,
  \end{alignat*}
  as desired. The argument for the second single sum, which only appears when $\beta=0$, is
  analogous.
\item Now we use Lemma~\ref{lemma:8}.(\ref{lemma:8:2}). Again, the claim about
  $\deg_y N$ follows immediately. For the bound on $\deg_xN$, first observe that
  \begin{alignat*}1
    \deg_x x^j N_{r,i}&\leq d_i+\deg_xc_0+\alpha r - i\\
       &=\deg_xc_0 + d + \alpha r - i - (w-i)^+\\
       &\leq\deg_xc_0 + d + \alpha r - w.
  \end{alignat*}
  This settles the terms in the double sum. For the terms in the single sum, we have
  \[
    \deg_x x^{d_i+1} N_{r,i} = \deg_x x^{d_0+1} N_{r,0} = \deg_xc_0+d+\alpha r - w + 1
  \]
  and $\lc_x N_{r,i} = \lc_x \ff\omega i N_{r,0}$ for $i=1,\dots,w-1$, and therefore
  \[
    \deg_x\Bigl(x^{d_i+1}N_{r,i} - \ff\omega i x^{d_0+1}N_{r,0}\Bigr)\leq
      \deg_xc_0+d+\alpha r - w.
  \]
\item[(2$'$)] In this case, the terms in the double sum contribute polynomials of degree
  \begin{alignat*}1
    \deg_x x^j N_{r,i}&\leq d_i+\deg_xc_0+\alpha r- i - \iverson{i>\omega}\delta\\
    &=\deg_xc_0 + d + \alpha r - i - (w-i)^+ - \iverson{i\leq\omega}\delta - \iverson{i>\omega}\delta\\
    &\leq\deg_xc_0 + d + \alpha r - w - \delta.
  \end{alignat*}
  For the terms in the first single sum, we have
  \[
    \deg_x x^{d_i+1} N_{r,i} = \deg_x x^{d_0+1} N_{r,0} = \deg_xc_0+d+\alpha r - w + 1 - \delta
  \]
  and $\lc_x N_{r,i} = \lc_x \ff\omega i N_{r,0}$ for $i=1,\dots,\omega$, and therefore
  \[
    \deg_x\Bigl(x^{d_i+1}N_{r,i} - \ff\omega i x^{d_0+1}N_{r,0}\Bigr)\leq
      \deg_xc_0+d+\alpha r - w - \delta.
  \]
  Similarly, for the terms in the second single sum, we have
  \[
    \deg_x x^{d_i+1} N_{r,i} = \deg_x x^{d_{\omega+1}+1} N_{r,\omega+1} \leq \deg_xc_0+d+\alpha r - w + 1 - \delta.
  \]
  If the inequality is strict, we are done. Otherwise, $\delta$ is maximal and we have
  $\lc_x N_{r,i}= \lc_x\ff{(-\delta-1)}{i-(\omega+1)} N_{r,\omega+1}$ for $i=\omega+2,\dots,w-1$,
  and therefore
  \[
    \deg_x\Bigl(x^{d_i+1}N_{r,i} - \ff{(-\delta-1)}{i-(\omega+1)} x^{d_{\omega+1}+1}N_{r,\omega+1}\Bigr)\leq
      \deg_xc_0+d+\alpha r - w - \delta,
  \]
  and we are also done.
\end{enumerate}
\leavevmode
\end{proof}

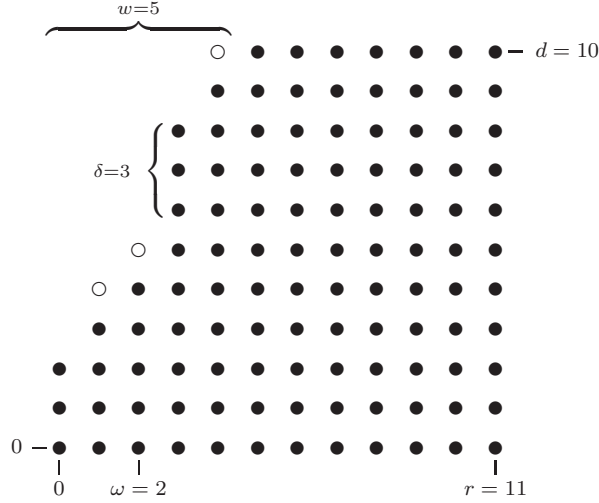
\begin{figure}
  \centerline{%
    \begin{picture}(195,180)(-15,-15)
      \multiput(0,0)(15,0){12}{\circle*{5}}
      \multiput(0,15)(15,0){12}{\circle*{5}}
      \multiput(0,30)(15,0){12}{\circle*{5}}
      \multiput(15,45)(15,0){11}{\circle*{5}}\put(15,60){\circle{5}}
      \multiput(30,60)(15,0){10}{\circle*{5}}\put(30,75){\circle{5}}
      \multiput(45,75)(15,0){9}{\circle*{5}}
      \multiput(45,90)(15,0){9}{\circle*{5}}
      \multiput(45,105)(15,0){9}{\circle*{5}}
      \multiput(45,120)(15,0){9}{\circle*{5}}
      \multiput(60,135)(15,0){8}{\circle*{5}}\put(60,150){\circle{5}}
      \multiput(75,150)(15,0){7}{\circle*{5}}
      \put(-5,0){\line(-1,0){5}}\put(-14,-2){\llap{\footnotesize $0$}}
      \put(175,150){\line(-1,0){5}}\put(180,148){\rlap{\footnotesize $d=10$}}
      \put(0,-5){\line(0,-1){5}}\put(0,-18){\clap{\footnotesize $0$}}
      \put(30,-5){\line(0,-1){5}}\put(30,-18){\clap{\footnotesize $\omega=2$}}
      \put(165,-5){\line(0,-1){5}}\put(165,-18){\clap{\footnotesize $r=11$}}
      \put(42,102.5){\llap{${\scriptstyle\delta=3}\ \left\{\rule{0pt}{20pt}\right.$}}
      \put(-5,155){$\overbrace{\kern70pt}^{w=5}$}
    \end{picture}}
  \caption{The ansatz for $P$ in case 2$'$ of Lemma~\ref{lemma:9}}\label{fig:3}
\end{figure}

Lemma~\ref{lemma:9} makes a statement on the number of equations to be expected
when the ansatz for $P$ is made in the form as indicated. This number of
equations is equal to the number of terms $x^iy^j$ in~$N$, and this number is
bounded by $(\deg_x N+1)(\deg_y N+1)$, for which upper bounds are stated in the
lemma. We also need to count the number of variables~$p_{i,j}$. This number is
easily obtained from the sum expressions given for $P$ in the various cases by
replacing all the summand expressions by~$1$. After some straightforward and elementary
simplifications which we do not want to reproduce here, the statistics
are as follows.

\begin{itemize}
\item In case 1, the number of variables is
  \begin{alignat*}1
    (r+1)(d+1) - \tfrac12\beta w(w+1) +
    \phi_1(w-1)^+-\phi_2.
  \end{alignat*}
\item In case 2, the number of variables is
  \[
    (r+1)(d+1) - \tfrac12 w(w+1) + (w-1)^+.
  \]
\item In case~2$'$, the number of variables is
  \[
    (r+1)(d+1) - \tfrac12 w(w+1) - \delta(\omega+1)+
    \omega + (w-\omega-2)^+.
  \]
\end{itemize}

These are only the variables coming from the degrees of freedom in the
telescoper~$P$.  We will next discuss the ansatz for the certificate~$Q$, which
will bring many additional variables, but, by a careful construction, no
additional equations.

\subsection{The Ansatz for the Certificate}\label{sec:3.2}

The design of the ansatz for the certificate is much simpler. Here, the goal is
to set up $Q$ in such a way that $(D_yQ)/h$ has the same
denominator and the same numerator degrees in $x$ and $y$ as $(Ph)/h$ does (in
order to not create more equations than necessary), and that $(D_yQ)/h$ cannot
become zero (in order to enforce that $P\neq0$ in every solution we find).

A direct calculation like in the proof of Lemma~\ref{lemma:7} confirms that the
first requirement is satisfied by choosing
\[
  Q = \frac{\sum\limits_{i=0}^{s_1}\sum\limits_{j=0}^{s_2} q_{i,j}x^i y^j}
           {c_0\bigl( b\sqfp b\prod_{\ell=1}^Lc_\ell\bigr)^{r-1}}h
\]
with
\begin{alignat*}1
  s_1&=\left\{\begin{array}{ll}
      \deg_xc_0+d+(\alpha+\beta)(r-1)-\beta w - \phi_2 - 1
      &\quad \text{in case 1 of Lemma~\ref{lemma:9};}\\
      \deg_xc_0+d+\alpha(r-1)- w &\quad \text{in case 2 of Lemma~\ref{lemma:9};}\\
      \deg_xc_0+d+\alpha(r-1)- w - \delta
       &\quad \text{in case 2$'$ of Lemma~\ref{lemma:9}}
    \end{array}\right.\\
\intertext{and}
  s_2&=\rlap{$\deg_yc_0+\gamma (r-1) + 1$}
      \hphantom{\left\{\rule{0pt}{2em}\right.}
      \begin{array}{ll}
        \hphantom{\deg_xc_0+d+(\alpha+\beta)(r-1)-\beta w - \phi_2 - 1}
      &\quad\kern-1.3pt\text{in all cases.}
      \end{array}
\end{alignat*}
This ansatz provides $(s_1+1)(s_2+1)$ variables. To ensure that $D_yQ\neq0$ for every
choice of~$q_{i,j}$, observe that $D_yQ=0$ can only happen if $h$ is a rational
function with respect to~$y$, meaning $a,b\in\K[x]$ and $c_\ell\in\K[x]$ for all $\ell$
with $e_\ell\not\in\set Z$. In this case, we have $D_yQ=0$ if and only if the $q_{i,j}$
are instantiated in such a way that the resulting $Q$ is free of~$y$, and this can
only happen if the choice of $q_{i,j}$ is made in such a way that the numerator degree
in $y$ is equal to the denominator degree in~$y$.
The denominator degree is
\[
  \sum_{\ell=1}^L (r-1-e_\ell)\deg_y c_\ell = \gamma(r-1) - \eta,
  \qquad\text{where }\eta=\sum_{\ell=1}^L e_\ell\deg_yc_\ell,
\]
which is less than $s_2=\deg_yc_0+\gamma(r-1)+1$ if and only if $\deg_yc_0+\eta+1>0$.
If we remove all the terms $q_{i,j}x^i y^j$ with $j=\gamma(r-1)-\eta$ from the
ansatz, no instantiation of the remaining $q_{i,j}$ can turn $Q$ into a term independent
of~$y$, so we can be sure that $D_yQ\neq0$ in this modified setup.
The number of variables in this modified ansatz is $(s_1+1)s_2$.
The flag $\phi_3$ defined in Definition~\ref{def:8} is set up in such a way
that we can in all cases assume an ansatz for $Q$ with $(s_1+1)(s_2+1-\phi_3)$ variables.
The following lemma summarizes the two versions of the ansatz for~$Q$.

\begin{lemma} Let $h$ be a hyperexponential term.
  \begin{enumerate}
  \item If $\max\{\deg_ya,\deg_yb\}>0$ or $\deg_yc_\ell>0$ for some $\ell$ with
    $e_\ell\not\in\set Z$, then for every $s_1,s_2\in\set N$ and every choice of
    $q_{i,j}\in\K$ where not all $q_{i,j}$ are equal to zero we have
    \[
      D_y\left(\frac{\sum\limits_{i=0}^{s_1}\sum\limits_{j=0}^{s_2} q_{i,j}x^iy^j}
           {c_0\bigl( b\sqfp b\prod_{\ell=1}^Lc_\ell\bigr)^{r-1}}h\right)\neq0.
    \]
  \item If $\deg_ya=\deg_yb=0$ and $\deg_yc_\ell=0$ for all $\ell$ with
    $e_\ell\not\in\set Z$, then for every $s_1,s_2\in\set N$ and every choice of
    $q_{i,j}\in\K$ where not all $q_{i,j}$ are equal to zero we have
    \[
      D_y\left(\frac{\sum\limits_{i=0}^{s_1}
        \biggl(\,\sum\limits_{j=0}^{(r-1)\gamma-\eta-1}\!
          q_{i,j}x^iy^j
          + \!\sum\limits_{j=(r-1)\gamma-\eta+1}^{s_2}\! q_{i,j}x^iy^j\biggr)}
           {c_0\bigl( b\sqfp b\prod_{\ell=1}^Lc_\ell\bigr)^{r-1}}h\right)\neq0,
    \]
    where $\eta=\sum_{\ell=1}^L e_\ell\deg_yc_\ell$.
  \end{enumerate}
\end{lemma}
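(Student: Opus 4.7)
My plan is to reduce both statements to analysing the logarithmic-derivative equation $D_y R/R = -D_y h/h$, where $R\in\K(x,y)$ is the rational function multiplying $h$ in the stated ansatz for $Q$. Since $D_y Q = h(D_y R + R\,D_y h/h)$ and $h\neq 0$, the condition $D_y Q = 0$ is equivalent to $R$ being a nonzero rational solution of that equation. The key elementary fact to use is that for any nonzero $R = \prod g_i^{m_i}\in\K(x,y)$ with $g_i$ irreducible and $m_i\in\set Z$, one has $D_y R/R = \sum_i m_i D_y g_i/g_i$: viewed as a rational function of $y$, this has only simple poles, with integer residues, and no polynomial part in $y$.

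For part~(1), I will inspect the partial fraction expansion (in $y$) of
\[
  -\frac{D_y h}{h} = -\frac{D_y c_0}{c_0} - D_y\Bigl(\frac{a}{b}\Bigr) - \sum_\ell e_\ell\frac{D_y c_\ell}{c_\ell}
\]
and show that under each alternative of the hypothesis some feature arises which cannot be matched by $D_y R/R$. Concretely: if $\deg_y b>0$ then at some $y$-dependent irreducible factor $p$ of $b$ the term $-D_y(a/b)$ produces a pole of order $\geq 2$; if $\deg_y b=0$ but $\deg_y a>0$ then $-D_y(a/b) = -D_y a/b$ contributes a nonzero polynomial part in~$y$; if $\max\{\deg_y a,\deg_y b\}=0$ but $\deg_y c_\ell>0$ for some $\ell$ with $e_\ell\notin\set Z$, then the simple pole at any $y$-dependent irreducible factor of $c_\ell$ carries residue $-e_\ell\notin\set Z$. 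In each case, using pole-order comparison and pairwise coprimality of the $c_\ell$, the obstructing feature cannot be cancelled by the remaining summands (which are sums of proper simple-pole terms), so it survives in $-D_y h/h$ and yields a contradiction. Hence no nonzero $R\in\K(x,y)$ satisfies the equation.

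For part~(2), the hypothesis makes $D_y(a/b) = 0$ and $D_y c_\ell = 0$ for all $\ell$ with $e_\ell\notin\set Z$, and also renders $b$ and $b^\ast$ free of~$y$. Substituting the explicit form of $R$ and simplifying, I will reduce the equation $D_y R/R = -D_y h/h$ to
\[
  \frac{D_y P}{P} = \sum_{\ell\,:\,e_\ell\in\set Z}(r-1-e_\ell)\,\frac{D_y c_\ell}{c_\ell},
\]
where $P := \sum q_{i,j}x^iy^j$. Integrating this log-derivative equation in $y$ identifies the general solution as $P(x,y) = f(x)\prod_{e_\ell\in\set Z} c_\ell^{r-1-e_\ell}$ for some $f\in\K(x)$. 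If some $r-1-e_\ell<0$ for a relevant $\ell$, pairwise coprimality of the $c_\ell$ shows that no polynomial $P$ can satisfy the equation at all; otherwise $\deg_y P = (r-1)\gamma - \eta$ exactly, with leading $y$-coefficient $f(x)\cdot\prod(\lc_y c_\ell)^{r-1-e_\ell}$, which is $f(x)$ times a nonzero polynomial in~$x$. Since the modified ansatz omits precisely the monomials $x^iy^{(r-1)\gamma-\eta}$, this leading coefficient must vanish, forcing $f=0$ and hence $P=0$, contradicting that not all $q_{i,j}$ are zero.

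The main technical care is needed in part~(1), where one must verify that the problematic features of $-D_y h/h$ are not neutralised by the remaining summands. This is handled cleanly by pole-order comparison (for the order-$\geq 2$ pole and the polynomial part, which the other summands simply cannot produce) and by the pairwise coprimality of the $c_\ell$ together with the integrality of multiplicities in $c_0$ (for the non-integer residue case).
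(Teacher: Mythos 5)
Your proof is correct and follows essentially the same route as the paper, which merely asserts that $D_yQ=0$ forces $h$ to be rational in $y$ and then, in that case, compares numerator and denominator degrees in $y$; your pole-order, polynomial-part and residue analysis of $D_yR/R=-D_yh/h$ supplies exactly the details behind that assertion, and your part~(2) is the paper's degree comparison made precise. The only point worth flagging is that the double-pole argument in the case $\deg_y b>0$ requires $a/b$ to be in lowest terms (otherwise, e.g.\ $a=yb$ gives no pole at all and one must fall back on the polynomial-part case), a normalization one may assume without loss of generality.
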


\section{Solving the Inequalities}\label{sec:5}

As the result of the previous section, we obtain counts for the number of
variables and the number of equations for a particular family of ansatzes which
are parameterized by the desired order~$r$ and degree~$d$ of the telescoper,
various Greek parameters introduced in Definition~\ref{def:8}, which measure the
input, and one additional parameter~$w$ by which the shape of the ansatz can be
modulated. A sufficient condition for the existence of a solution of order (at
most)~$r$ and degree (at most)~$d$ is
\[
  \mathrm{\#vars}(r,d,w) - \mathrm{\#eqns}(r,d,w) > 0.
\]
For any particular choice of $w$ from the ranges specified for the various cases
in Lemma~\ref{lemma:9}, we obtain a valid sufficient condition connecting $r$
and~$d$ via the Greek parameters.
Any of these conditions defines a region in
$\set N^2$ which is inside the gray region from the introduction. To make this
region as large as possible (and hence, as equal as possible to the gray
region), we will choose $w$ in such a way that the left hand side, considered as
a function in~$w$, is maximal.

It comes in handy that $\mathrm{\#vars}(r,d,w) - \mathrm{\#eqns}(r,d,w)$ is a
(piecewise) quadratic polynomial with respect to~$w$, so the optimal choice of
$w$ is easily found by equating its derivative with respect to $w$ to zero and
rounding the solution to the nearest integer. If this point is outside the range
to which $w$ is constrained, then the maximum is assumed at one of the two
boundary points of the range.

The following theorem, which is the main result of this article, contains the
bounds which we obtained by applying this reasoning to the explicit expressions
derived for $\mathrm{\#vars}(r,d,w)$ and $\mathrm{\#eqns}(r,d,w)$ in the
previous section for the various cases to be considered.

\begin{theorem}\label{thm:1}
  Let
  \[
    h= c_0\exp\Bigl(\frac{a}{b}\Bigr)\prod_{\ell=1}^L c_\ell^{e_\ell}
  \]
  be a hyperexponential term and let $\alpha,\beta,\gamma,\delta,\omega,\phi_1,\phi_2,\phi_3$
  be as in Definition~\ref{def:8} and set $\psi=\gamma+\phi_3-2$.
  Then a creative telescoping relation for $h$ of order~$r$ and degree~$d$ exists whenever
  \[
     r\geq \psi + 1\qquad\text{and}\qquad
     d > \frac{\vartheta\,r + \varphi}{r-\psi} \ ,
  \]
  where $\vartheta$ and $\varphi$ are defined as follows.
  \begin{enumerate}
  \item\label{thm:1:1} If $\deg_x a>\deg_x b$, let
    \begin{alignat*}1
      \vartheta &=(\alpha+\beta)(2\gamma-1+\phi_3)+\gamma-1,\\
      \varphi &= \deg_xc_0+(\alpha+\beta+1)\deg_yc_0+(\gamma-2+\phi_3)(\deg_xc_0-\alpha-\beta-\phi_2)\\
      &\qquad{}
      - (1-\phi_2)(\gamma-2+\phi_3)^+\bigl(\phi_1+\tfrac12\beta(\gamma-1+\phi_3)\bigr).
    \end{alignat*}
  \item\label{thm:1:2} If $\deg_x a\leq\deg_x b$, let
    \begin{alignat*}1
      \vartheta &= \alpha(2\gamma-1+\phi_3)-1,\\
      \varphi &=\deg_xc_0+\alpha\deg_yc_0+(\gamma-2+\phi_3)(\deg_xc_0+1-\alpha)\\
         &\quad{}-\tfrac12(\gamma-2+\phi_3)^+(\gamma+1+\phi_3).
    \intertext{If furthermore $\omega\in\set N$ and $\gamma-1+\phi_3>\omega$ and $\delta=\omega+1$, then $\varphi$ can be
    replaced by}
       \varphi'&=\varphi-\delta(\gamma-2+\phi_3-\omega)+1.
    \end{alignat*}
  \end{enumerate}
\end{theorem}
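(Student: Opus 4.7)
The plan is to follow exactly the strategy sketched in the opening paragraphs of Section~\ref{sec:5}: combine Lemma~\ref{lemma:9} (ansatz for the telescoper with explicit bounds on $\deg_x N$, $\deg_y N$, and the variable counts listed right after) with the certificate lemma (ansatz for $Q$ providing $(s_1{+}1)(s_2{+}1{-}\phi_3)$ additional variables without creating a single new equation) to assemble closed expressions for $\mathrm{\#vars}(r,d,w)$ and $\mathrm{\#eqns}(r,d,w)$. A sufficient condition for the existence of a telescoping relation of order~$r$ and degree~$d$ is then $\mathrm{\#vars}(r,d,w)>\mathrm{\#eqns}(r,d,w)$ for some admissible~$w$, and we optimize this over~$w$.

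I would treat each of the three cases (\ref{lemma:9:1}), (\ref{lemma:9:2}), and (2$'$) of Lemma~\ref{lemma:9} separately and uniformly. In each case, $s_1$ is chosen exactly so that $\deg_x N=s_1+1$ and $s_2$ so that $\deg_y N=s_2-1+\phi_3$, which makes $\mathrm{\#eqns}(r,d,w)=(\deg_xN+1)(\deg_yN+1)$ factor neatly through the Greek parameters of Definition~\ref{def:8}. The difference $\Delta(w):=\mathrm{\#vars}(r,d,w)-\mathrm{\#eqns}(r,d,w)$ comes out as a piecewise quadratic polynomial in~$w$ with negative leading coefficient ($-\tfrac12\beta$ in case~\ref{thm:1:1}, $-\tfrac12$ in case~\ref{thm:1:2}), so $\Delta$ has a unique maximum. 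Setting $\partial\Delta/\partial w=0$ and rounding to the nearest integer yields the optimal~$w^\ast$; substituting $w^\ast$ back into $\Delta(w^\ast)>0$ and clearing denominators gives an inequality of the form $(r-\psi)d>\vartheta\, r+\varphi$ with $\psi=\gamma+\phi_3-2$ and with $\vartheta,\varphi$ precisely the expressions stated in the theorem. The refined $\varphi'$ in case~\ref{thm:1:2} arises from the option to use variant (2$'$) of Lemma~\ref{lemma:9}, which removes $\delta(\omega+1)$ variables but simultaneously saves $\delta(\gamma-1+\phi_3)$ equations, the net effect being the additive correction $-\delta(\gamma-2+\phi_3-\omega)+1$.

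The hard part, and essentially the only real obstacle, is the bookkeeping: one has to track the three flags $\phi_1,\phi_2,\phi_3$, the $(\cdot)^+$ truncations coming from the Verbaeten-style completion in Lemma~\ref{lemma:9}, and verify that the optimum $w^\ast$ actually falls inside the admissible intervals $0\le w\le\min\{r,d/\beta\}$, $0\le w\le\min\{d+1,r+1\}$, and $\omega\le w\le\min\{d-\delta+1,r+1\}$ respectively. The hypothesis $r\geq\psi+1$ together with the assumed bound $d>(\vartheta r+\varphi)/(r-\psi)$ is exactly what is needed to guarantee admissibility; for $(r,d)$ so large that $w^\ast$ would exceed the upper end of its range, the bound can only become easier (the quadratic $\Delta$ is still positive at the boundary value of $w$), so no separate argument is needed there. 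Once the admissibility check is done, what remains is elementary algebraic simplification of $\Delta(w^\ast)>0$, which I would carry out mechanically case by case to recover the stated formulas for $\vartheta$, $\varphi$, and $\varphi'$.
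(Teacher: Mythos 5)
Your proposal follows essentially the same route as the paper's proof: assemble $\mathrm{\#vars}(r,d,w)-\mathrm{\#eqns}(r,d,w)$ from Lemma~\ref{lemma:9} and the certificate ansatz, impose positivity, and substitute the optimal $w=\gamma-1+\phi_3$ (falling back to $w=0$ when $\phi_2=1$ or $\gamma\le1-\phi_3$), with the $\varphi'$ refinement coming from variant (2$'$) exactly as you describe. The only step you underestimate slightly is the admissibility verification — in the paper the upper-range condition is never threatened by large $(r,d)$ (the ranges grow with $r$ and $d$), but the check $\gamma-1+\phi_3\le d-\delta+1$ in case (2$'$) is nontrivial enough that the authors delegate it to a cylindrical algebraic decomposition computation; otherwise your plan matches theirs.
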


\begin{proof}
  \begin{enumerate}
  \item Suppose $\deg_x a>\deg_x b$.
    According to the calculations done in the previous section, in this case there exists
    an ansatz with
    \[
      (r+1)(d+1) - \tfrac12\beta w(w+1) + \phi_1 (w-1)^+-\phi_2
    \]
    variables coming from the telescoper~$P$,
    \[
     \bigl(\deg_x c_0 + d + (\alpha+\beta)(r-1) - \beta w -
     \phi_2\bigr)\bigl(\deg_yc_0+\gamma(r-1)+2-\phi_3\bigr)
    \]
    variables coming from the certificate~$Q$, and
    \[
     \bigl(\deg_xc_0+d+(\alpha+\beta)r-\beta w - \phi_2 + 1\bigr)
     \bigl(\deg_yc_0+\gamma r+1\bigr)
    \]
    equations. Therefore, a creative telescoping relation exists provided that
      \begin{alignat*}1
        &(r+1)(d+1)-\tfrac12\beta w(w+1)+\phi_1(w-1)^+-\phi_2\notag\\
        &\quad{}+(\deg_xc_0+d+(\alpha+\beta)(r-1)-\beta w-\phi_2)(\deg_yc_0+\gamma(r-1)+2-\phi_3)\\
        &\qquad{}-(\deg_xc_0+d+(\alpha+\beta)r-\beta w-\phi_2+1)(\deg_yc_0+\gamma r+1)>0.\notag
      \end{alignat*}
    For $r\geq\gamma-1+\phi_3$, this inequality is equivalent to
    \begin{alignat}1
      d>&\Bigl(\!\bigl((\alpha+\beta)(2\gamma-1+\phi_3\bigr)+\gamma-1\bigr)r + \deg_xc_0+(\alpha+\beta+1)\deg_yc_0\notag\\
       &\quad{}+(\gamma-2+\phi_3)(\deg_xc_0-\alpha-\beta-\phi_2)\label{eq:3}\\
       &\qquad{}
         + \tfrac12\beta w(w-2\gamma+3-2\phi_3)- \phi_1(w-1)^+\Bigr)
         \Big/\Bigl(r-\gamma+2-\phi_3\Bigr).\notag
    \end{alignat}
    The choice $w=0$ proves the claim when $\phi_2=1$ or $\gamma\leq 1-\phi_3$.
    Now suppose that $\phi_2=0$ and $\gamma>1-\phi_3$.
    The claimed estimate is obtained for the choice $w=\gamma-1+\phi_3>0$.
    We have to show that this choice is admissible, i.e., that $1\leq\gamma-1+\phi_3\leq\min\{r,d/\beta\}$.
    Because of $\gamma>1-\phi_3$, the lower bound is clear,
    and $r\geq\gamma-1+\phi_3$ holds by assumption.
    To see that $\gamma-1+\phi_3\leq d/\beta$, observe that the right hand side of \eqref{eq:3}
    converges to $(\alpha+\beta)(2\gamma-1+\phi_3)+\gamma-1$ for $r\to\infty$.
    Since its numerator is nonnegative (as checked by a straightforward calculation), it
    follows that this inequality implies
    \[
      d>(\alpha+\beta)(2\gamma-1+\phi_3)+\gamma-1\geq\beta(\gamma-1+\phi_3),
    \]
    as desired.
  \item Now assume $\deg_x a\leq\deg_x b$.
    From the counts of variables and equations in the ansatz described in
    Lemma~\ref{lemma:9}.(\ref{lemma:9:2}), we find that a creative telescoping equation
    exists provided that
    \begin{alignat*}1
      &(r+1)(d+1)-\tfrac12 w(w+1)+(w-1)^+\notag\\
      &\quad{}+(\deg_xc_0+d+\alpha(r-1)-w+1)(\deg_yc_0+\gamma(r-1)+2-\phi_3)\\
      &\qquad{}-(\deg_xc_0+d+\alpha r-w+1)(\deg_yc_0+\gamma r+1)>0.\notag
    \end{alignat*}
    For $r\geq\gamma-1+\phi_3$, this inequality is equivalent to
    \begin{alignat*}1
      d>\Bigl(&(\alpha(2\gamma-1+\phi_3)-1)r +
          \deg_xc_0+\alpha \deg_yc_0 + (\gamma-2+\phi_3)(\deg_xc_0+1-\alpha)
          \\
      &\quad{}
         + (\tfrac32-\gamma-\phi_3)w + \tfrac12 w^2 - (w-1)^+
      \Bigr)\Big/\Bigl(r-\gamma+2-\phi_3\Bigr).
    \end{alignat*}
    Regardless of the choice of~$w$, the right hand side is at least
    $\alpha(2\gamma-1+\phi_3)-1$.
    Similar as before, the claimed bound follows on one hand from the choice
    $w=0$ and on the other hand, if $\gamma>1-\phi_3$, from the choice $w=\gamma-1+\phi_3$,
    which also in this case is in the required range because $1\leq \gamma-1+\phi_3\leq
    \alpha(2\gamma-1+\phi_3)-1<d$
    and $\gamma-1+\phi_3\leq r$.

    The second estimate is obtained from the alternative ansatz from Lemma~\ref{lemma:9}.(2$'$).
    The inequality in this case is
    \begin{alignat*}1
      &(r+1)(d+1)-\tfrac12 w(w+1)-\delta(\omega+1)+\omega+(w-\omega-2)^+\notag\\
      &\quad{}+(\deg_xc_0+d+\alpha(r-1)-w-\delta+1)(\deg_yc_0+\gamma(r-1)+2-\phi_3)\\
      &\qquad{}-(\deg_xc_0+d+\alpha r-w-\delta+1)(\deg_yc_0+\gamma r+1)>0,
    \end{alignat*}
    which for $r\geq\gamma-1+\phi_3$ and $w=\gamma-1+\phi_3$ is equivalent to
    \[
     d>\frac{(\alpha(2\gamma-1+\phi_3)-1)r+\varphi'}{r-\gamma+2-\phi_3}.
    \]
    It remains to show that the choice $w=\gamma-1+\phi_3$ is compatible
    with the range restrictions for $w$ applicable in the present case.
    While the requirements $\omega\leq\gamma-1+\phi_3\leq r+1$
    are satisfied by assumption, the requirement $\gamma-1+\phi_3\leq d-\delta+1$
    is less obvious. A sufficient condition is
    \[
      \frac{(\alpha(2\gamma-1+\phi_3)-1)r+\varphi'}
           {r-\gamma+2-\phi_3}\geq\gamma-2+\phi_3+\delta.
    \]
    It can be shown easily with Collins's cylindrical algebraic decomposition
    algorithm \citep{collins75,caviness98} (e.g., with its implementation in
    Mathematica~\citep{strzebonski00,strzebonski06}) that this latter inequality
    follows from $\deg_xc_0\geq0$, $\deg_yc_0\geq0$, $\alpha\geq1$,
    $r\geq\gamma-1+\phi_3\geq\omega+1\geq1$, $\delta=\omega+1$, $\phi_3(\phi_3-1)=0$,
    and
    \begin{alignat*}1
     &\varphi'=\deg_xc_0+\alpha\deg_yc_0+\delta\omega+1+(\gamma-2+\phi_3)(\deg_xc_0-\alpha-\tfrac12(\gamma-1+\phi_3)-\delta).
    \end{alignat*}
    This completes the proof.
  \end{enumerate}
\end{proof}

As we do not claim that our bounds are sharp, no justification for the various
choices of $w$ are required in the proof. But of course, the choices were made
following the reasoning outlined before the theorem. For example, in case~1 the
main inequality is
\begin{alignat*}1
  &(r+1)(d+1+\phi_2)-\tfrac12\beta w(w+1)+\phi_1(w-1)^+-\phi_2\notag\\
  &\quad{}+(\deg_xc_0+d+(\alpha+\beta)(r-1)-\beta w-\phi_2+1)(\deg_yc_0+\gamma(r-1)+2-\phi_3)\\
  &\qquad{}-(\deg_xc_0+d+(\alpha+\beta)r-\beta w-\phi_2+1)(\deg_yc_0+\gamma r+1)>0.
\end{alignat*}
Differentiating the left hand side with respect to $w$ gives
\[
 -\beta w - \tfrac32\beta + \beta\gamma + \phi_1 + \beta\phi_3,
\]
which vanishes for $w=\gamma-\tfrac32+\phi_3+\phi_1/\beta$. The unique nearest integer
point is $\lfloor\gamma-\tfrac32+\phi_3+\phi_1/\beta\rceil=\gamma-1+\phi_3$ when
$\phi_1/\beta\neq1$. When $\phi_1/\beta=1$, there are two nearest integer
points $\gamma-1+\phi_3$ and $\gamma+\phi_3$, and since the maximum is exactly between them
and quadratic parabolas are symmetric about their extremal points, the values at
$\gamma-1+\phi_3$ and $\gamma+\phi_3$ agree. In conclusion, the choice $w=\gamma-1+\phi_3$ is optimal
in both cases.

The calculations for the other cases are similar. But note that having chosen
$w$ optimally does not imply that the bounds given in the Theorem~\ref{thm:1}
are tight, because the whole argument relies on counting variables and equations
for the particular ansatz family introduced in Section~\ref{sec:ansatz}, and we
cannot claim that this shape is best possible. Recall that we aim at an ansatz
for which the number of solutions of the resulting linear system is equal to (or
at least not much larger than) the difference between number of variables and
number of equations. One way of measuring the quality of our ansatz, and hence
the tightness of our bounds, is to compare the region of all points $(r,d)$
where an ansatz for order~$r$ and degree~$d$ actually has a solution (the ``gray
region'' from the introduction) with the region of all points $(r,d)$ for which
Theorem~\ref{thm:1} guarantees the existence of a solution. The following
collection of examples shows that there are cases where Theorem~\ref{thm:1} is
extremely accurate as well as cases where there is a clear gap between the
predicted shape and the actual shape of the gray region. As a reference ansatz
for experimentally determining in the examples whether a specific point $(r,d)$
belongs to the gray region, we checked whether the naive ansatz where
$d_0=d_1=\dots=d_r$ (i.e., $w=0$) as a solution, because every solution of some
refined ansatz with $w>0$ is also a solution of the ansatz with $w=0$.  It is
not guaranteed however that this ansatz covers all creative telescoping
relations. Additional relations at points $(r,d)$ outside of what we indicate as
the gray region may exist. For example, when our ansatz leads to a solution
$(P,Q)$ in which all the polynomial coefficients of $P$ share a nontrivial
common factor $f\in\K[x]$, then $(P/f,Q/f)$ is another relation with a
telescoper of lower degree. This phenomenon can often be observed for the
minimal order telescoper, but as we do not know of any efficient way of
detecting it also for the nonminimal ones, we can unfortunately not take it into
account in the figures.

\begin{example}\label{ex:13}
\begin{enumerate}
\item\label{ex:13:1} Consider the term $h=u\exp(v)$ where
  \begin{alignat*}1
    u &= 7 x^3 y^3+8 x^3 y^2+9 x^3 y+3 x^3+10 x^2 y^3+2 x^2 y^2+3 x^2 y+9 x^2\\
    &\quad{}+7 x y^3+4 x y^2+5 x y+3 x+9 y^3+6 y^2+6 y+1,\\
    v &= 6 x^3 y^3+4 x^3 y^2+x^3 y+9 x^3+8 x^2 y^3+8 x^2 y^2+2 x^2 y+8 x^2\\
    &\quad{}+3 x y^3+7 x y^2+4 x y+8 x+5 y^3+2 y^2+7 y+6.
  \end{alignat*}
  We are in case \ref{thm:1:1} of Theorem~\ref{thm:1} and have
  $\alpha=0$, $\beta=2$, $\gamma=3$, $\phi_1=\phi_2=\phi_3=0$,
  $\deg_xc_0=\deg_yc_0=3$.
  According to the theorem, we expect creative telescoping relations
  for all $(r,d)$ with $r\geq2$ and $d>(12r+11)/(r-1)$.
  Figure~\ref{fig:4}.(a) depicts the curve $(12r+11)/(r-1)$ together with
  the gray region. In this example, the gray region consists exactly of the
  integer points above the curve: the bound is as tight as can be.
\item\label{ex:13:2} Now consider the term $h=\exp(u)/v$ where
  \begin{alignat*}1
    u &= 4 x^2 y^2+7 x^2 y+9 x^2+5 x y^2+2 x y+3 x+5 y^2+y+6,\\
    v &= 6 x^2 y^2+10 x^2 y+6 x^2+9 x y^2+5 x y+8 x+8 y^2+10 y+8.
  \end{alignat*}
  We are again in case \ref{thm:1:1} of the theorem and we have
  $\alpha=2$, $\beta=1$, $\gamma=4$, $\phi_1=\phi_2=\phi_3=0$,
  $\deg_xc_0=\deg_yc_0=2$.
  The estimate from Theorem~\ref{thm:1} is now $d>(24r-9)/(r-2)$, which is depicted together
  with the gray region in Figure~\ref{fig:4}.(b). In this case, the bound is not
  sharp.
\item\label{ex:13:3} Now let $h$ be the rational function from the introduction.
  Then we are in case \ref{thm:1:2} of the theorem and we have
  $\alpha=3$, $\beta=-1$, $\gamma=3$, $\omega=-1$, $\delta=0$, $\phi_1=1$, $\phi_2=0$, $\phi_3=1$,
  $\deg_xc_0=\deg_yc_0=2$.
  The bound from the theorem is now $d>(17r+3)/(r-2)$, which is shown together with
  the gray region in Figure~\ref{fig:4}.(c). The curve correctly predicts all
  the degrees except for the minimal order recurrence, where the true degree is
  one less than predicted.
\item\label{ex:13:4} Next, let $h=u/v$ with
  \begin{alignat*}1
    u &= 4 x^2 y^2+7 x^2 y+9 x^2+5 x y^2+2 x y+3 x+5 y^2+y+6,\\
    v &= \bigl(6 x^2 y^2+10 x^2 y+6 x^2+9 x y^2+5 x y+8 x+8 y^2+10 y+8\bigr)\\
      &\quad\times\bigl(8 x^2 y^2+7 x^2 y+4 x^2+5 x y^2+3 x y+7 x+9 y^2+7 y+7\bigr).
  \end{alignat*}
  This term is also covered by case \ref{thm:1:2} of the theorem, and we have
  $\alpha=4$, $\beta=-1$, $\gamma=4$, $\omega=-2$, $\delta=-1$, $\phi_1=1$, $\phi_2=0$, $\phi_3=0$,
  $\deg_xc_0=\deg_yc_0=2$. The estimate $d>(27r+3)/(r-2)$ from the theorem is
  correct but not tight, as shown in Figure~\ref{fig:4}.(d).
\item\label{ex:13:5} Finally, let $h=\sqrt{u}$ with
  \begin{alignat*}1
    u&= 4 x^2 y^6+8 x^2 y^5+2 x^2 y^4+7 x^2 y^3+7 x^2 y^2+2 x^2 y+7 x^2+10 x y^6+7 x y^5+9 x y^4\\
     &\qquad{}+4 x y^3+5 xy^2+5 x y+7 x+4 y^6+3 y^5+2 y^4+8 y^3+3 y^2+7 y+2.
  \end{alignat*}
  Now the alternative bound of case \ref{thm:1:2} with $\varphi'$ in place of $\varphi$
  is applicable because we have $\omega=1\in\set N$.
  The bound using $\varphi$ is $d>(21r-18)/(r-4)$. The first correctly predicted
  degree occurs at $r=14$.
  In contrast, the bound $d>(21r-23)/(r-4)$ using $\varphi'$ is tight for all $r>5$ and
  only off by one for the minimal order $r=5$.
  The situation is shown in Figure~\ref{fig:5}. On the right, we show a comparison
  of the sharp bound based on~$\varphi'$ (solid), the bound based on~$\varphi$ (dashed)
  and the bound which would be obtained by choosing $w=0$ instead of $w=\gamma-1+\phi_3$
  in the proof of Theorem~\ref{thm:1} (dotted).
\end{enumerate}
\end{example}

\begin{figure}



  \centerline{\raisebox{1em}{\rlap{(a)}}\qquad \includegraphics{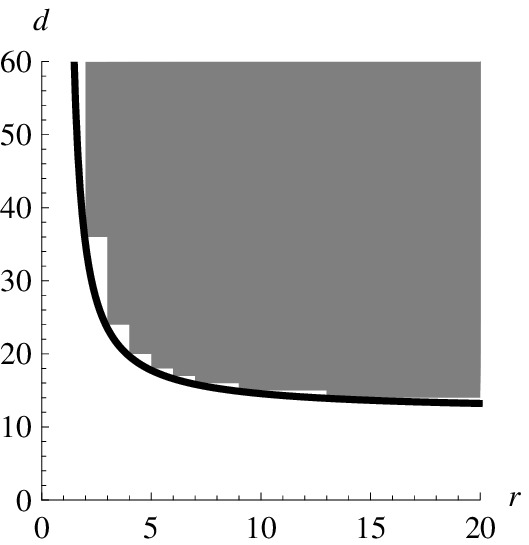}
    \hfil
    \raisebox{1em}{\rlap{(b)}}\qquad \includegraphics{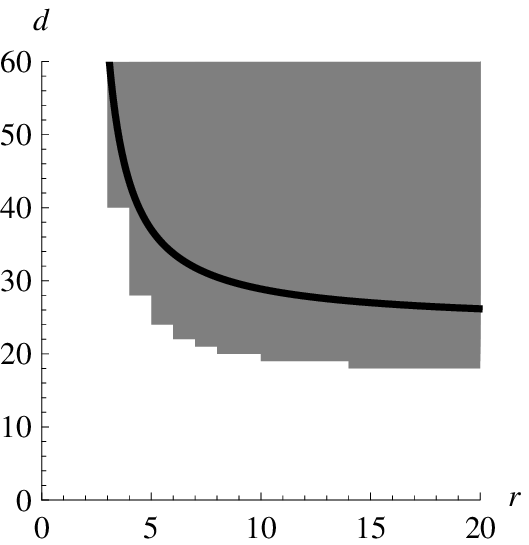}}

  \medskip



  \centerline{\raisebox{1em}{\rlap{(c)}}\qquad \includegraphics{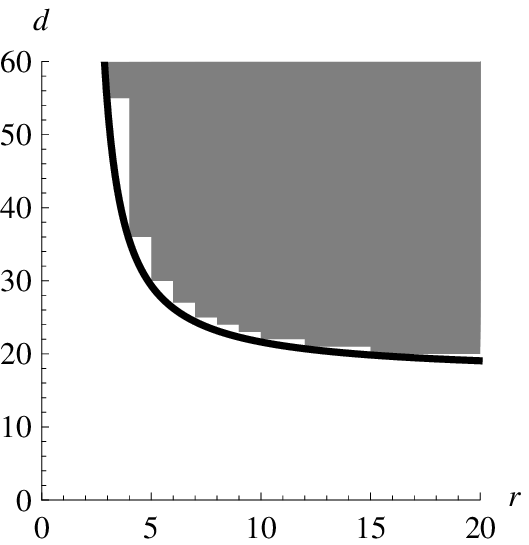}
    \hfil
    \raisebox{1em}{\rlap{(d)}}\qquad \includegraphics{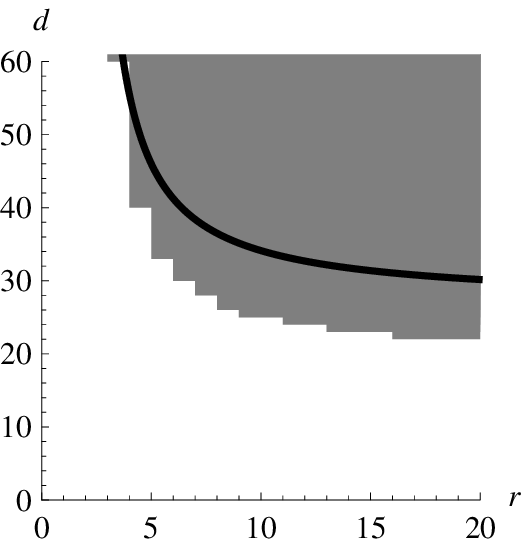}}

  \smallskip

 \caption{Sizes $(r,d)$ of creative telescoping relations together with the curve predicted
   by Theorem~\ref{thm:1}, for the hyperexponential terms discussed in Example~\ref{ex:13}.}
 \label{fig:4}
\end{figure}

\begin{figure}



  \centerline{\null\qquad\includegraphics{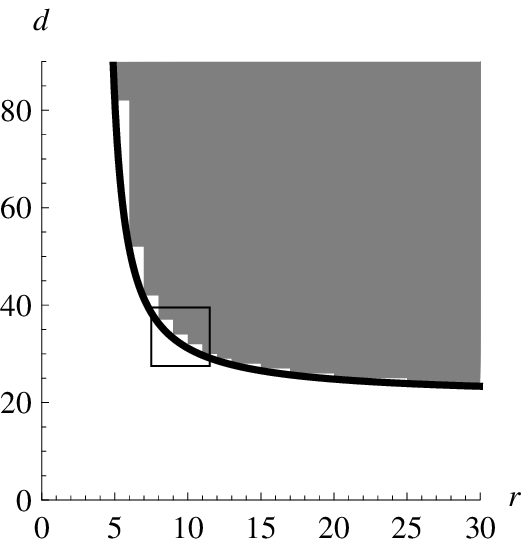}\hfil\qquad\includegraphics{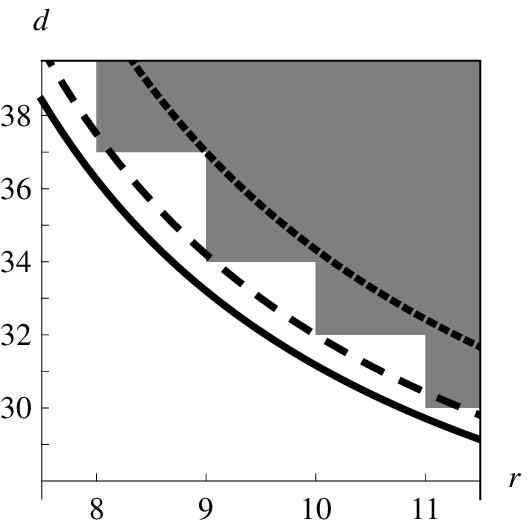}}

  \caption{Left: Sizes $(r,d)$ of creative telescoping relations together with the curve predicted
    by Theorem~\ref{thm:1}, for the term discussed in Example~\ref{ex:13}.(\ref{ex:13:5}).
    Right: a detail of the figure on the left in a larger scale, together with the curve
    based on $\varphi$ instead of $\varphi'$ (dashed) and the curve based on $w=0$ (dotted).
    The correct degrees are precisely the smallest integers strictly above the solid curve.
    The two variations both overshoot for all the points in this range.}\label{fig:5}
\end{figure}

There are several ways of refining the ansatz for $P$ and $Q$ even further in order
to achieve better estimates where ours are not sharp. Here are some ideas.

\begin{itemize}
\item The possibility of introducing extra variables without increasing the
  number of equations (depicted by the white bullets in Figures \ref{fig:1}
  and~\ref{fig:2}) rests on the observation made in Lemma~\ref{lemma:8} that the
  leading coefficients $\lc_x N_{r,i}$ are $\K$-multiples of each other, i.e.,
  that these leading coefficients generate a linear subspace of $\K[y]$ of
  dimension one. Experiments suggest that this observation can be generalized
  to the coefficients of lower degree as follows: If $V_j\subseteq\K[y]$ denotes
  the vector space generated by the coefficients of $x^{\deg_x N_{r,i} - j}$ in
  $N_{r,i}$ ($i=0,\dots,r$), then $V_0\subseteq V_1\subseteq\cdots\subseteq V_j$
  and $\dim V_j\leq j+1$ at least for small~$j$. If this is true, it would allow
  adding more extra variables without increasing the number of equations.
\item In general, comparing coefficients of the monomials $x^i y^j$ of a
  polynomial $S$ to zero results in a linear system with $(\deg_x S+1)(\deg_y
  S+1)$ equations. But if $S$ contains some factor which is free of the
  variables $p_{i,j}$ and~$q_{i,j}$, then canceling this factor before
  comparing coefficients results in a system with fewer equations and the same
  number of variables. While in our case, it is too much to hope for a factor
  which would divide $S$ as a whole, it seems that at least in some cases,
  factors can be removed from $\lc_x S \in\K[y]$ or $\lc_y S\in\K[x]$. For
  example, when $\deg_x a>\deg_x b$ and $\deg_y a>\deg_y b$, it can be shown
  that $\prod_{\ell=1}^L\lc_xc_\ell\bigm| \lc_x S$ and
  $\prod_{\ell=1}^L\lc_yc_\ell\bigm| \lc_y S$, so
  $\sum_{\ell=1}^L\bigl(\deg_y\lc_xc_\ell + \deg_x\lc_yc_\ell\bigr)$ equations can be discarded
  in this case.
\end{itemize}

We have not worked out the influence of these variations in full generality, but
only on some examples. It turned out that they indeed lead to tighter estimates,
but the difference is rather small, and decays to zero for large~$r$. At the
same time, they would lead to much more complicated formulas. We do not know the
reason for the gap in Examples~\ref{ex:13}.(\ref{ex:13:2})
and~\ref{ex:13}.(\ref{ex:13:4}) between the curve from Theorem~\ref{thm:1} and
the boundary of the gray region for $r\to\infty$. Even though it appears more
important for a bound to be tight for small orders than for large ones, we would
be very interested in seeing a refined bound which closes this gap.

It is also interesting to compare the gray regions for hyperexponential terms composed
from dense random polynomials with the gray regions for hyperexponential terms of the same
shape that originate from some specific application. According to our experiments, the
shape of the gray region for a randomly chosen term $h=c_0\exp(a/b)\prod_{\ell=1}^L c_\ell^{e_\ell}$
only depends on the number $L$ of factors in the product, the degrees of the polynomials
$a,b,c_0,\dots,c_L$, and the exponents $e_1,\dots,e_L$.
However, input containing sparse polynomials or polynomials which in some other sense have
a ``structure'' may well have considerably smaller degrees.

\begin{figure}


  \centerline{\includegraphics{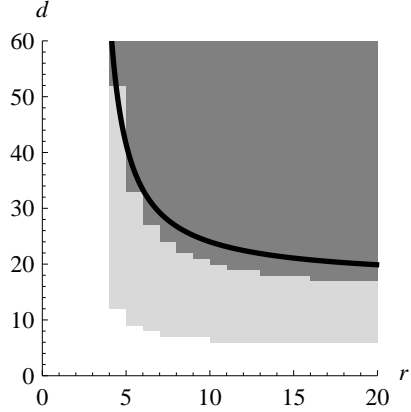}}

  \caption{Gray regions for the two terms $h$ (light gray) and $g$ (dark gray) from Example~\ref{ex:16}.
    Although all Greek parameters have the same values for $h$ and $g$ (and hence, Theorem~\ref{thm:1}
    gives the same degree estimation curve), the actual gray regions differ significantly.}\label{fig:7}
\end{figure}

\begin{example}\label{ex:16}
  If $a_{n,k}$ denotes the number of HC-polynomioes with $n$ cells and $k$ rows
  \cite[Section~4.9]{wilf89},
  then
  \[
    \sum_{n,k=0}^\infty a_{n,k}x^n y^k=\frac{x y(1-x)^3}{(1-x)^4 - x y (1-x-x^2+x^3+x^2 y)}.
  \]
  A differential equation for the generating function $\sum_{n=0}^\infty a_{n,n}x^n$ of the
  number of HC-polynomials with $n$ cells and $n$ rows can be obtained by applying creative telescoping
  to the rational function obtained from the rational function above by substituting $x$ by~$y$, $y$ by~$x/y$,
  and dividing the result by~$y$. Let thus
  \[
    h = \frac1 y\frac{y \frac xy(1-y)^3}{(1-y)^4 - y \frac xy (1-y-y^2+y^3+y^2 \frac xy)}
      = \frac{x(1-y)^3}{y((1-y)^4 - x(1-y+x y-y^2+y^3))}.
  \]
  Here we have $c_0=x(1-y)^3$, $a=0$, $b=1$, $c_1=y$, $c_2=((1-y)^4 - x(1-y+x y-y^2+y^3)$, $e_1=e_2=-1$.
  The gray region for $h$ is shown in light gray in Figure~\ref{fig:7}.
  For comparison, the same figure contains the gray region (in dark gray) for a term $g$ which
  was obtained from~$h$ by replacing $c_0$ and $c_2$ by dense random polynomials with $\deg_xc_0=1$,
  $\deg_yc_0=3$, $\deg_xc_2=2$, $\deg_yc_2=4$, so that all the Greek parameters have precisely the
  same values for $g$ and~$h$.

  Theorem~\ref{thm:1} predicts relations whenever $d\geq\frac{17r-2}{r-3}$ (black curve), which is
  a good estimate for the generic term~$g$ but a significant overestimation for the special term~$h$.
\end{example}

\section{Consequences and Applications}\label{sec:6}

Our theorem contains as a special case Theorem~cAZ of \cite{apagodu06}, which 
says that a (non-rational) hyperexponential term always admits a telescoper of order
$r=\gamma+1$, but makes no statement about its degree~$d$. Similarly,
we can also give an estimate for the possible degrees~$d$ without paying attention
to their orders~$r$.

\begin{corollary}
  \begin{enumerate}
  \item For every hyperexponential term $h$, there exists a creative telescoping
    relation of order $r=\psi+1=\gamma+1-\phi_3$.
  \item For every hyperexponential term $h$, there exists a creative telescoping
    relation of degree
    \[
      d=\vartheta+1=\left\{\begin{array}{ll}
          (\alpha+\beta)(2\gamma-1+\phi_3)+\gamma&\quad\text{if $\deg_xa>\deg_xb$;}\\
          \alpha(2\gamma-1+\phi_3)&\quad\text{if $\deg_xa\leq\deg_xb$.}
      \end{array}\right.
    \]
  \end{enumerate}
\end{corollary}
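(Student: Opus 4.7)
The plan is to apply Theorem~\ref{thm:1} at the two limiting points of the bounding hyperbola $d=(\vartheta r+\varphi)/(r-\psi)$. The vertical asymptote $r=\psi$ governs the minimal achievable order, and the horizontal asymptote $d=\vartheta$ governs the minimal achievable degree, so part~(1) and part~(2) follow by specializing to each of these in turn.

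For part~(1), the plan is to set $r=\psi+1$ in Theorem~\ref{thm:1}. Then $r-\psi=1$, and the inequality $d>(\vartheta r+\varphi)/(r-\psi)$ collapses to the finite condition $d>\vartheta(\psi+1)+\varphi$, which is satisfied by any sufficiently large integer~$d$. The theorem then provides a creative telescoping relation of order exactly $\psi+1$, which is the claimed value.

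For part~(2), the dual argument applies: fix $d=\vartheta+1$ and let $r$ grow. The condition $\vartheta+1>(\vartheta r+\varphi)/(r-\psi)$ rearranges to the linear inequality $r>\psi(\vartheta+1)+\varphi$, which together with $r\geq\psi+1$ holds for all sufficiently large integers~$r$. Theorem~\ref{thm:1} then yields a creative telescoping relation of degree $\vartheta+1$. The explicit closed forms stated in the corollary are obtained by substituting the two case-specific values of~$\vartheta$ coming from parts~\ref{thm:1:1} and~\ref{thm:1:2} of the theorem.

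No genuine obstacle is expected: the corollary merely captures the two asymptotic limits of the hyperbolic bound from Theorem~\ref{thm:1}. The only thing to verify is that the chosen integer parameters $r$ (in part~(1)) and $r$ large enough (in part~(2)) lie in the admissible range of the theorem, which in both cases reduces to a routine inequality check on positive integer quantities.
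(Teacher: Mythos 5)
Your proposal is correct and takes the same route as the paper, which simply states that both claims are immediate from the formulas in Theorem~\ref{thm:1}; you have merely spelled out the specializations ($r=\psi+1$ with $d$ large for part~(1), $d=\vartheta+1$ with $r$ large for part~(2)) that the paper leaves implicit. The algebra in both rearrangements checks out, so nothing further is needed.
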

\begin{proof}
  Both claims are immediate by the formulas given in Theorem~\ref{thm:1}.
\end{proof}

In connecting order~$r$ and degree~$d$ into a single formula,
Theorem~\ref{thm:1} makes a much stronger statement than this
corollary. Assuming for simplicity that the bounds of Theorem~\ref{thm:1} are
tight, we can use them to compute optimal choices for order and degree of the
telescoper. There are various quantities which one may want to minimize.
Besides asking for a bound on the minimal order or the minimal degree, as
carried out above, we may ask for a choice $(r,d)$ where the computational
cost is minimal, or the total size $S(r,d):=(r+1)(d+1) +
(s_1+1)(\deg_xc_0+\gamma(r-1)+2)$ of the output (consisting of telescoper and
certificate), or the size $T(r,d):=(r+1)(d+1)$ of the output
telescoper alone.  Or, if the telescoper~$P$ is to be transformed into a
recurrence for the series coefficients of its solutions, one may want to
minimize the order of this recurrence, which is bounded by $R(r,d):=r+d$
\citep[see, e.g., Thm. 7.1 in][]{kauers10j}.

For minimizing the computational cost, we first have to fix a particular
algorithm for computing $P$ and $Q$ for given~$h$. We are not forced to follow
the algorithm which is implicit in the analysis of Sections~\ref{sec:4}
and~\ref{sec:5} (making an ansatz, comparing coefficients with respect to $x$
and~$y$ to zero, and solving a linear system of equations over~$\K$). In fact,
this algorithm has a rather poor performance. It is much better to do a
coefficient comparison with respect to $y$ only and to solve a linear system of
equations over~$\K(x)$. This is also what is proposed in the original articles
\citep{almkvist90,mohammed05,apagodu06} and what is used in practice
\citep{koutschan09,koutschan10b}. Output sensitive linear system solvers based
on Hermite-Pad\'e approximation \citep{beckermann94,storjohann05,bostan07} are
able to determine the degree~$n$ solutions of a linear system over $\K(x)$ with
$m$ variables and at most $m$ equations using $\O^\sim(n m^3)$
operations in~$\K$. Since an ansatz over $\K(x)$ will have only $r+1$ variables
coming from the telescoper, $\deg_yc_0+\gamma(r-1)-\phi_3+2$ variables coming
from the certificate, and a solution of degree $s_1$ with respect to~$x$,
it seems reasonable to assume that the computational cost is minimal
for a choice $(r,d)$ which minimizes the function
$C(r,d):=s_1(\deg_yc_0+(\gamma+1)r-\gamma-\phi_3+3)^3$.

\begin{example}\label{ex:17}
  Consider a hyperexponential term~$h=c_0\exp(a/b)\sqrt{c_1}$ where
  $a,b,c_0,c_1\in\K[x,y]$ have the degrees
  $\deg_xa=\deg_ya=\deg_xb=\deg_yb=1$, $\deg_xc_0=\deg_yc_0=2$, $\deg_xc_1=4$,
  $\deg_yc_1=6$. We are in case~\ref{thm:1:2} of Theorem~\ref{thm:1} and have
  $\alpha=6$, $\beta=-1$, $\gamma=8$, $\omega=4$, $\delta=5$, $\phi_1=0$, $\phi_2=0$,
  $\phi_3=0$. According to the theorem, a creative telescoping relation
  exists for $(r,d)$ with $r\geq7$ and $d\geq(89r-40)/(r-6)+1=(90r-46)/(r-6)$.

  On the curve $d=(90r-46)/(r-6)$, the cost function $C(r,d)=(6r+d-16)(9r-3)^3$
  assumes its minimal value for $r=8$ rather than for the minimal order $r=7$.
  Finding this optimal value is easy: regard $r$ temporarily as real variable
  and use calculus to determine the minimum of $C(r,\tfrac{90r-46}{r-6})$. This
  gives a minimum point near~$r=7.679$. It follows that the minimum for
  $r\in\set N$ is either at $r=7$ or at $r=8$. Comparing the actual values of
  $C$ at these two points indicates that the 8th order telescoper is about 8\%
  cheaper than the 7th order operator, and hence the cheapest operator of all.

  By similar calculations, we find that the output size (telescoper and
  certificate combined) is minimized for $r=10$, the size of the telescoper
  alone is minimized for $r=12$, and the order of the recurrence associated to
  the telescoper is minimized for $r=28$. See Figure~\ref{fig:6} for an
  illustration.
\end{example}

\begin{figure}


  \centerline{\includegraphics{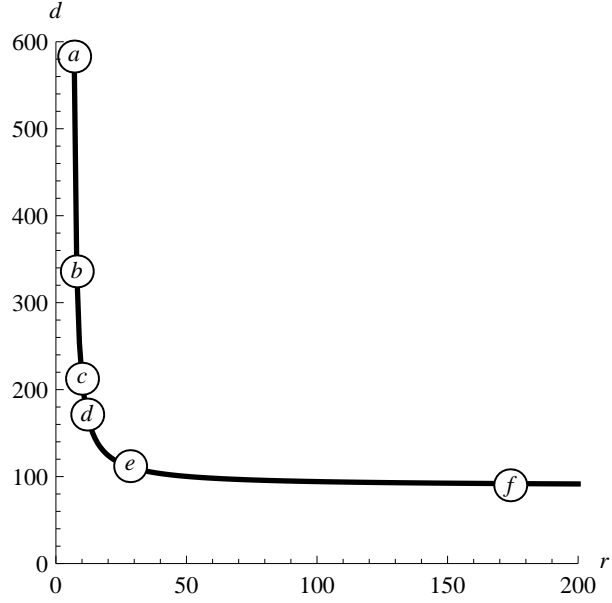}}

  \caption{Points $(r,d)$ on the curve for which
    (a) the order,
    (b) the computational cost,
    (c) the size of telescoper and certificate combined,
    (d) the size of the telescoper only,
    (e) the order of the recurrence corresponding to the telescoper, and
    (f) the degree
    is minimal.}\label{fig:6}

\end{figure}

For the moment, the term~$h$ considered in the above example is a bit too big to
actually compute the creative telescoping relations of orders 7 and 8 and
compare the difference of the timings to the predicted speedup of~8\%.  On
smaller examples, the minimal (predicted) complexity is achieved for the minimal
order operator. It may seem that an improvement by just a few percent is not
really worth the effort. But in fact, the improvement gained in the example is
just the tip of an iceberg. Asymptotically, as the input size increases, the
speedup becomes more and more significant. In the next result, which is a
generalization and a refinement of a result of~\cite{bostan10b}, we give precise
estimates.

\begin{corollary}\label{corr:1} Let $h$ be a hyperexponential term and
  $\tau=\max\{\alpha,\gamma,\deg_xc_0,\deg_yc_0\}$. 
  Let $\kappa$ be an increasing sublinear function with the property
  that degree $n$ solutions of a linear system with $m$ variables and at
  most $m$ equations over $\K(x)$ can be computed with $nm^3\kappa(\max\{n,m\})$ operations
  in~$\K$. Then a creative telescoping relation of order $r=\tau-1+\phi_3$ can be
  computed using
  \[
    2\kappa(2\tau^3)\tau^9 + \O^\sim(\tau^8)
  \]
  operations in~$\K$.
  If $r$ is chosen such that
  \[
    r = \tfrac14(1+\sqrt{17})\tau + \O(1)
    \leq 1.281\tau + \O(1)
  \]
  then a creative telescoping relation of order~$r$ can be computed using
  \[
    \tfrac1{32}(349+85\sqrt{17})\kappa(11\tau^2)\tau^8 + \O(\tau^7)
    \leq 21.86\kappa(11\tau^2)\tau^8 + \O^\sim(\tau^7)
  \]
  operations in~$\K$.
  In particular, creative telescoping relations for hyperexponential terms can be
  computed in polynomial time.
\end{corollary}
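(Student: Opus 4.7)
The plan is to apply Theorem~\ref{thm:1} to select a pair $(r,d)$ for which a telescoping relation is guaranteed, and then to bound the cost of computing it via the output-sensitive $\K(x)$-solver described just before the corollary. With coefficient comparison only in~$y$, the ansatz from Section~\ref{sec:ansatz} produces a linear system over $\K(x)$ whose unknowns are the $r+1$ polynomials $p_i(x)$ of~$P$ together with the $\deg_yc_0+\gamma(r-1)-\phi_3+2$ polynomials arising in the $y$-expansion of~$Q$, giving $m=(r+1)+(\deg_yc_0+\gamma(r-1)-\phi_3+2)$ unknowns and at most $m$ equations. The solution degree in~$x$ is $n=s_1=\deg_xc_0+d+\alpha(r-1)+\O(1)$. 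By hypothesis the solver then uses $nm^3\kappa(\max\{n,m\})$ operations in~$\K$, so the task reduces to choosing $r$ (and the minimal corresponding $d$) so as to match the claimed estimates.

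By Definition~\ref{def:8} every Greek parameter is $\O(\tau)$; more precisely one has $\alpha,\gamma,\deg_xc_0,\deg_yc_0,\psi\leq\tau$ and $\vartheta\leq 2\tau^2$, $\varphi=\O(\tau^2)$. Taking $d=\lceil(\vartheta r+\varphi)/(r-\psi)\rceil+1$ from Theorem~\ref{thm:1} yields $m=\O(\tau r)$ and $n=\O(d+\tau r)$ in terms of the single free parameter~$r$. For part~(i) I set $r=\tau-1+\phi_3\geq\psi+1$; the inequality $r-\psi\geq1$ then forces $d\leq\vartheta r+\varphi+1\leq 2\tau^3+\O(\tau^2)$, and consequently $n\leq(2+o(1))\tau^3$, $m\leq(1+o(1))\tau^2$. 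The bound $nm^3\kappa(\max\{n,m\})\leq 2\kappa(2\tau^3)\tau^9+\O^\sim(\tau^8)$ follows.

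For part~(ii) I write $r=c\tau+\O(1)$ with $c>1$ and track leading orders. Theorem~\ref{thm:1} now gives $d\sim 2c\tau^2/(c-1)$, whence $n\sim c(c+1)\tau^2/(c-1)$ and $m\sim c\tau^2$, so the cost is asymptotically
\[
  \frac{c^4(c+1)}{c-1}\,\kappa\!\Bigl(\frac{c(c+1)}{c-1}\tau^2\Bigr)\tau^8+\O^\sim(\tau^7).
\]
Setting the logarithmic derivative $\frac{4}{c}+\frac{1}{c+1}-\frac{1}{c-1}$ of $g(c):=c^4(c+1)/(c-1)$ to zero produces $2c^2-c-2=0$, whose positive root is $c=(1+\sqrt{17})/4$. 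A direct calculation using the identity $2c^2=c+2$ yields $(c+1)/(c-1)=4+\sqrt{17}$ and $c^4=(49+9\sqrt{17})/32$, whence $g(c)=(349+85\sqrt{17})/32$ and the $\kappa$-argument reduces to $(21+5\sqrt{17})\tau^2/4<11\tau^2$, matching the claimed bound. Polynomiality is then immediate from sublinearity of~$\kappa$.

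The main difficulty is not the optimisation itself, which is routine calculus once leading orders have been extracted, but the bookkeeping required to separate the dominant $\tau^8$ (respectively $\tau^9$) term from the $\O^\sim(\tau^7)$ (respectively $\O^\sim(\tau^8)$) correction. In particular, one must control the remainders incurred when expanding $(\vartheta r+\varphi)/(r-\psi)$ for $r=c\tau+\O(1)$, and absorb into the error terms all contributions from rounding $r$ and $d$ to integers and from the non-leading parts of the Greek parameters.
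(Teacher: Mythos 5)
Your proposal is correct and follows essentially the same route as the paper: invoke Theorem~\ref{thm:1} to get the admissible curve $d(r)$, model the cost as $nm^3\kappa(\max\{n,m\})$ with $n=s_1$ and $m=(r+1)+(\deg_yc_0+\gamma(r-1)-\phi_3+2)$, and evaluate at $r=\tau-1+\phi_3$ respectively $r=c\tau$ with $c$ minimizing $c^4(c+1)/(c-1)$. You in fact supply more detail than the paper's own proof (which just says ``evaluate $C(r,f(r))$''), carrying out the optimization $2c^2-c-2=0$ and the exact evaluation of the leading constant that the paper only sketches in the remark following the corollary; your treatment of $\beta$ as a lower-order contribution matches the paper's own (implicit) handling.
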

\begin{proof}
  First assume $\deg_xa>\deg_xb$. According to Theorem~\ref{thm:1}, there exists a creative
  telescoping relation of order $r$ and degree~$d$ whenever $r\geq\tau-1+\phi_3$ and
  \[
    d\geq f(r):=\frac{(2\tau^2+(2\beta+\phi_3)\tau+(\phi_3-1)\beta)r + \O(\tau^2)}{r-\tau+2-\phi_3},
  \]
  where the term $\O(\tau^2)$ is independent of~$r$.
  A creative telescoping relation of order $r$ and degree~$d$ can be computed using
  at most
  \[
    C(r,d)=\bigl((r+1)\tau+3-\phi_3\bigl)^3
           \bigl((\beta+\tau)r+d-\beta(\tau+\phi_3)-\phi_2-1\bigr)
           \kappa\bigl((\beta+\tau)(r+1)+d\bigr)
  \]
  operations in~$\K$. The claim follows from evaluating $C(r,f(r))$ at $r=\tau-1+\phi_3$
  and $r=\tfrac14(1+\sqrt{17})\tau+\O(1)$, respectively, and replacing the 
    arguments of $\kappa$ by generous upper bounds.

  For the case $\deg_xa\leq\deg_xb$, the estimates are proved analogously. Although the
  formulas for $f(r)$ and $C(r,d)$ are slightly different in this case, the final result
  turns out to be the same. We leave the details to the reader.
\end{proof}

The strange constant $\tfrac14(1+\sqrt{17})$ in Corollary~\ref{corr:1} is chosen
such as to minimize the multiplicative constant in the complexity bound under
the simplifying assumption that $\kappa$ is constant. It was
determined by first equating $\frac{d}{dr}C(r,f(r))$ to zero, which yielded the
optimal choice of~$r$ as an algebraic function in $\tau$, $\beta$,
and~$\phi_3$. The term $\tfrac14(1+\sqrt{17})\tau$ is the dominant term in the
asymptotic expansion of this function for $\tau\to\infty$. It is perhaps
noteworthy that the choice of the constant is irrelevant for achieving a cost of
$\O^\sim(\tau^8)$, as long as the constant is greater than~$1$. Taking $r=u\tau$ for
arbitrary but fixed $u>1$ leads to the complexity bound
$\frac{u^4(u+1)}{u-1}\kappa\tau^8+\O^\sim(\tau^7)$.  The choice
$u=\tfrac14(1+\sqrt{17})$ only minimizes the leading coefficient. Since
$\tfrac14(1+\sqrt{17})\approx1.28$, the result indicates that when $\alpha$ and
$\gamma$ are large and approximately equal, it appears to be most efficient to
compute a telescoper whose order is about 30\% larger than the minimum order.

In the same way as exemplified in Corollary~\ref{corr:1}, we have also determined the
choices for~$r$ for which some other quantities become minimal.
The results are given in Table~\ref{tab:1}.

\begin{table}
\begin{center}
\begin{tabular}{cc|c|c|c|c|c}
    & $r$ & $C(r,d)$ & $S(r,d)$ & $T(r,d)$ & $R(r,d)$ & $d$ \\\hline
(a) & $\tau$ & $2\kappa\tau^9$ & $\frac{2}{2-\phi_3}\tau^5$ & $2\tau^4$ & $2\tau^3$ & $2\tau^3$ \\
(b) & $\frac{1+\sqrt{17}}4\tau$ & $\frac{349+85\sqrt{17}}{32}\kappa\tau^8$
    & $\frac{53+13\sqrt{17}}8\tau^4$
    & $\frac{11+3\sqrt{17}}2\tau^3$ & $(5+\sqrt{17}) \tau^2$ & $(5+\sqrt{17}) \tau^2$ \\
(c) & $\frac{1+\sqrt5}2\tau$ & $\frac{29+13\sqrt5}2\kappa\tau^8$ & $\tfrac{11+5\sqrt5}2\tau^4$
    & $(4+2\sqrt5)\tau^3$ & $(3+\sqrt5)\tau^2$ & $(3+\sqrt5)\tau^2$ \\
(d) & $2\tau$ & $48\kappa\tau^8$ & $12\tau^4$ & $8\tau^3$ & $4\tau^2$ & $4\tau^2$ \\
(e) & $\sqrt2\tau^{3/2}$ & $4\kappa\tau^{10}$ & $2\tau^5$ & $2\sqrt2\tau^{7/2}$ & $2\tau^2$ & $2\tau^2$ \\
(f) & $2\tau^3$ & $16\kappa\tau^{16}$ & $4\tau^8$ & $4\tau^5$ & $2\tau^3$ & $2\tau^2$
\end{tabular}
\end{center}

\medskip

\caption{Minimizing various functions on the curve of Theorem~\ref{thm:1}.
  The table shows the order~$r$, the complexity~$C(r,d)$, the output size~$S(r,d)$
  of telescoper and certificate, the output size~$T(r,d)$ of the telescoper only,
  the recurrence order~$R(r,d)$, and the degree~$d$ of the
  telescoper when $r$ is chosen such that
  (a)~$r$ is minimal,
  (b)~$C(r,d)$ is minimal,
  (c)~$S(r,d)$ is minimal,
  (d)~$T(r,d)$ is minimal,
  (e)~$R(r,d)$ is minimal,
  (f)~$d$ is minimal.
  The parameters $\tau$ and $\kappa$ have the same meaning as in Corollary~\ref{corr:1}.
  The arguments of $\kappa$ are suppressed.
  Only the dominant terms of the asymptotic expansion for $\tau\to\infty$ are given.
  In rows (e) and~(f), the values for $d$ differ only in the lower order terms.}\label{tab:1}
\end{table}

As a final application, we improve some of the results given by \cite{bostan07}
on differential and recurrence equations related to algebraic functions. Let
$m\in\K[x,y]$ be irreducible with $\deg_ym\geq1$, and let $a\in\K[[x]]$ be such
that $m(x,a(x))=0$. According to Proposition~2 in their paper, if $P+D_yQ$
is a creative telescoping relation for~$y(D_ym)/m$, then $Pa=0$.  Thus we can
use our results about creative telescoping to derive estimates for differential
equations for~$a$.

\begin{corollary}
  Let $m\in\K[x,y]$ and $a=\sum_{n=0}^\infty a_nx^n\in\K[[x]]$ be as above and
  write $\tau_x:=\deg_xm$, $\tau_y:=\deg_ym$.
  Assume $\tau_x>0$ and $\tau_y>0$. Then
  \begin{enumerate}
  \item\label{cor:17:1} The series $a$ satisfies a linear differential equation of order $r=\tau_y$ with
    coefficients of degree
    \[
      d=2\tau_x\tau_y^2-\tfrac12\tau_y^2+\tau_x\tau_y-\tfrac32\tau_y+\tau_x+3.
    \]
  \item\label{cor:17:2} The series $a$ also satisfies a linear differential equation of order $r=2\tau_y$
    with coefficients of degree
    \[
      d= 4\tau_x\tau_y -\tfrac12\tau_y - 3\tau_x - 1 + \Bigl\lceil4\frac{\tau_x+1}{\tau_y+1}\Bigr\rceil.
    \]
  \item\label{cor:17:3} The coefficient sequence $(a_n)_{n=0}^\infty$ satisfies a
    linear recurrence equation of order
    \[
      \Bigl\lceil 2\tau_x\tau_y +\tau_y-1+\sqrt{(8\tau_y^2-4\tau_y+4)\tau_x - 2\tau_y^2 -6\tau_y+12}\,\Bigr\rceil
    \]
    with polynomial coefficients of degree
    \[
      \Bigl\lceil\tau_y-1+\tfrac12\sqrt{(8\tau_y^2-4\tau_y+4)\tau_x - 2\tau_y^2 -6\tau_y+12}
      \,\Bigr\rceil.
    \]
  \end{enumerate}
\end{corollary}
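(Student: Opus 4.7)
The plan is to apply Theorem~\ref{thm:1} to the specific hyperexponential term $h = y(D_y m)/m$ and then invoke the observation stated just before this corollary (Proposition~2 of \cite{bostan07}): any telescoper $P$ for $h$ annihilates~$a$. Every $(r,d)$ pair for which Theorem~\ref{thm:1} guarantees a creative telescoping relation for $h$ therefore yields a differential equation of order $r$ and degree $d$ for~$a$.

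My first step will be to put $h$ in the standard form $c_0\exp(a/b)\prod c_\ell^{e_\ell}$. Since $m$ is irreducible (hence square-free), this is immediate with $c_0 = y\,D_ym$, $a = 0$, $b = 1$, $L = 1$, $c_1 = m$, $e_1 = -1$. I then read off the parameters of Definition~\ref{def:8}: $\deg_xc_0 = \tau_x$, $\deg_yc_0 = \tau_y$, $\alpha = \tau_x$, $\beta = -1$, $\gamma = \tau_y$, $\omega = \tau_x - \tau_x = 0$, $\phi_1 = 1$, $\phi_2 = 0$, and $\phi_3 = 1$ (the last because $a/b = 0$, $e_1\in\set Z$, and $\tau_y\geq -\tau_y$). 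Since $\omega\in\set N$, I will choose $\delta = \omega+1 = 1$ (the smallest value allowed by Lemma~\ref{lemma:delta}); combined with $\gamma-1+\phi_3 = \tau_y > 0 = \omega$, this satisfies the three hypotheses needed to use the sharper estimate $\varphi'$ in case~\ref{thm:1:2}. A routine simplification then gives $\psi = \tau_y-1$, $\vartheta = 2\tau_x\tau_y - 1$, and $\varphi' = \tau_x\tau_y + \tau_x - \tfrac12\tau_y^2 - \tfrac12\tau_y + 2$.

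Parts~(\ref{cor:17:1}) and~(\ref{cor:17:2}) are now direct specializations of the bound $d > (\vartheta r + \varphi')/(r-\psi)$. For (\ref{cor:17:1}), the choice $r = \tau_y = \psi+1$ makes the denominator~$1$, and the numerator expands to exactly the asserted formula. For (\ref{cor:17:2}), the choice $r = 2\tau_y$ gives denominator $\tau_y+1$; dividing $2\vartheta\tau_y + \varphi'$ by $\tau_y+1$ yields quotient $4\tau_x\tau_y - \tfrac12\tau_y - 3\tau_x - 2$ and rational remainder $4(\tau_x+1)/(\tau_y+1)$, which after rounding up to respect the strict inequality matches the asserted~$d$. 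For (\ref{cor:17:3}) I will use the standard conversion (referenced above via $R(r,d)$) of a differential operator of order~$r$ and degree~$d$ into a recurrence whose shift order is at most $r+d$ with polynomial coefficients of degree at most~$r$. Minimizing $r + d$ along the curve of Theorem~\ref{thm:1} is a one-variable calculus problem: after substituting $s := r - \psi$ the objective becomes $s + K/s + (2\tau_x\tau_y + \tau_y - 2)$ with $K := \vartheta\psi + \varphi'$, and a short computation verifies $4K = (8\tau_y^2 - 4\tau_y + 4)\tau_x - 2\tau_y^2 - 6\tau_y + 12$. The real-valued optimum at $s = \sqrt K$ reproduces the radical in the claim, and taking ceilings to restore integrality of $r$ and $d$ delivers both the stated recurrence order and its coefficient degree.

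The main obstacle will be bookkeeping rather than insight: carefully extracting the eight Greek/flag parameters from Definition~\ref{def:8} without error, verifying that the choice $\delta = 1$ is legitimate for the refined bound $\varphi'$, and executing the two polynomial divisions and the one elementary optimization without arithmetic slip. No new conceptual ingredient beyond Theorem~\ref{thm:1} and the quoted result of Bostan et al.\ is required.
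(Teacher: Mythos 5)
Your proposal is correct and follows essentially the same route as the paper: apply Theorem~\ref{thm:1}.(\ref{thm:1:2}) with the sharper constant $\varphi'$ to $h=y(D_ym)/m$, specialize $r=\tau_y$ and $r=2\tau_y$ for the first two parts, and minimize $r+d$ along the curve (your $4K$ is indeed the stated radicand) before converting to a recurrence for the third. The only imprecision is that $\deg_xc_0=\deg_x(yD_ym)$ and hence $\omega$ need not equal $\tau_x$ and $0$ (only $\deg_xc_0\leq\tau_x$ and $\omega\leq0$, as the paper writes); when $\omega<0$ the $\varphi'$ branch is unavailable, but the applicable $\varphi$ is then smaller than your $\varphi'$, so the claimed bounds still follow by monotonicity.
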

\begin{proof}
  For $h=y(D_ym)/m$ we have $\deg_xc_0\leq\alpha=\tau_x$, $\deg_yc_0=\gamma=\tau_y$,
  $\omega\leq0$, $\delta\leq1$, and $\phi_3=1$.
  According to Theorem~\ref{thm:1}.(\ref{thm:1:2}),
  a creative telescoping relation of order $r$ and degree $d$ exists provided that
  $r\geq\tau_y$ and
  \[
    d\geq\frac{4\tau_x\tau_y r + 2\tau_x\tau_y-\tau_y^2-3\tau_y+2\tau_x+6}{2(r-\tau_y+1)}.
  \]
  Parts \ref{cor:17:1} and \ref{cor:17:2} follow from here by setting $r=\tau_y$
  or $r=2\tau_y$, respectively.
  For part~\ref{cor:17:3}, observe first that there exists a creative telescoping
  relation of order~$r$ and degree~$d$ where
  \begin{alignat*}1
    r&\geq\tau_y-1+\tfrac12\sqrt{(8\tau_y^2-4\tau_y+4)\tau_x - 2\tau_y^2 -6\tau_y+12},\\
    d&\geq2\tau_x\tau_y + \tfrac12\sqrt{(8\tau_y^2-4\tau_y+4)\tau_x - 2\tau_y^2
      -6\tau_y+12}.
  \end{alignat*}
 {}From here the claim follows by the fact that when a power series
  $a$ satisfies a linear differential equation of order~$r$ and degree~$d$, then
  its coefficient sequence satisfies a linear recurrence equation of order $r+d$
  and degree~$r$.
\end{proof}

These results are to be compared with the corresponding results of Bostan et al.\
(degree $4\tau_x\tau_y^2+{}$smaller terms for part~\ref{cor:17:1},
order $6\tau_y$ and degree $3\tau_x\tau_y$ for part~\ref{cor:17:2},
and order and degree $2\tau_x\tau_y+\tau_y+1$ for part~\ref{cor:17:3}), as well as with the conjectures
about the minimal sizes they found experimentally
($2\tau^3-3\tau^2+3\tau$ for part~\ref{cor:17:1} when $\tau_x=\tau_y=:\tau$
and order and degree $2\tau_x\tau_y-2-(\tau_x-\tau_y)$ for part~\ref{cor:17:3} if $\tau_y>1$).

\section{Conclusion}

What is the shape of the gray region? Where does it come from? And how can it be
exploited?---These were the guiding questions for the work described in this
article. As a main result, we have given in Theorem~\ref{thm:1} a simple
rational function whose graph passes approximately along the boundary of the
gray region, in some examples more accurately than in others. This curve was
derived from a somewhat technical analysis of the linear systems resulting from
a specific ansatz over~$\K$. Where the curve does not describe the gray region
accurately, these linear systems have solutions despite of having more equations
than variables. Some possible reasons for this phenomenon were taken into
account in the design of the ansatz, thereby improving the accuracy of the
estimate compared to a naive approach. However, as shown in
Examples~\ref{ex:13}.(\ref{ex:13:2}) and~\ref{ex:13}.(\ref{ex:13:4}), there seem
to be further effects which sometimes cause a gap between the true degrees and
our prediction. It would be interesting to know what these effects are, and to
derive sharper estimates from them. Ultimately, it would be desirable to have a
version of Theorem~\ref{thm:1} which is generically~tight.

Tight curves allow for optimizing computational cost, output sizes, and other
measures by trading order against degree. As the degree decreases when the order
grows, it is not always optimal to compute the minimal order operator. In
Example~\ref{ex:17}, we have illustrated how the curve of Theorem~\ref{thm:1}
can be used to calculate a priori the optimal orders for several interesting
measures. Of course, if the curve is not tight, these predictions may not be
correct, but even then, at least they provide some useful orientation. Tightness
of the curve is also not required for deriving asymptotic bounds on the
complexity. As we have shown in Corollary~\ref{corr:1}, the difference between
the optimal choice and other choices is significant for asymptotically large input
size. We believe that this result is not only of theoretical interest. Even if
the minimal cost may be achieved for the minimal order in any example which is
feasible with currently available hardware, it can be seen from
Example~\ref{ex:17} that it already starts to make a difference for inputs which
are only slightly beyond the capability of today's computers. We therefore expect
that the technique of trading order for degree will help to optimize the
performance of efficient implementations of creative telescoping in the near
future.

\bigskip
\noindent
\textbf{Acknowledgements.} We wish to thank Christoph Koutschan and Carsten Schneider
for valuable remarks on an earlier draft of this article.

\bibliographystyle{elsart-harv}
\bibliography{bib}

\end{document}